\journal{Operations Research Letters}
\newcommand{\qtext}[2][\quad]{#1\text{#2}#1}
\newcommand{\erfcx}[1]{\mathrm{Erfcx}\left(#1\right)}
\newcommand{\bsvega}{\mathcal{V}_\mathrm{BS}}
\newcommand{\bsdelta}{\mathcal{D}_\mathrm{BS}}
\newcommand{\price}{C_\mathrm{BS}}
\newcommand{\pvega}{C_\mathcal{V}}
\newcommand{\pdelta}{C_\mathcal{D}}
\newcommand{\kk}{k}
\newtheorem{prop}{Proposition}
\newtheorem{lemma}{Lemma}
\newproof{pf}{Proof}
\newtheorem{cor}{Corollary}[prop]
\newtheorem{rem}{Remark}[prop]
\date{October 2, 2024}
\begin{document}

\begin{frontmatter}
\title{Tighter `uniform bounds for Black--Scholes implied volatility'\\ and the applications to root-finding}

\author{Jaehyuk Choi\corref{corrauthor}}
\ead{jaehyuk@phbs.pku.edu.cn}
\address{Peking University HSBC Business School, Shenzhen, China}

\author{Jeonggyu Huh}
\ead{jghuh@skku.edu}
\address{Department of Mathematics, Sungkyunkwan University, Suwon, Korea} 

\author{Nan Su}
\ead{sunan@stu.pku.edu.cn}
\address{Peking University HSBC Business School, Shenzhen, China}

\cortext[corrauthor]{Corresponding author \textit{Tel:} +86-755-2603-0568, \textit{Address:} Peking University HSBC Business School, University Town, Nanshan, Shenzhen 518055, China}

\begin{abstract}
	Using the option delta systematically, we derive tighter lower and upper bounds of the Black--Scholes implied volatility than those in \citeauthor{tehranchi2016bound} [SIAM J. Financ. Math. 7 (2016), 893--916]. As an application, we propose a Newton--Raphson algorithm on the log price that converges rapidly for all price ranges when using a new lower bound as an initial guess. Our new algorithm is a better alternative to the widely used naive Newton--Raphson algorithm, whose convergence is slow for extreme option prices.
\end{abstract}
\begin{keyword}
	Implied volatility, lower (upper) bounds, options, Black--Scholes model
\end{keyword}
\end{frontmatter}

\section{Introduction} \noindent
\citet{tehranchi2016bound} found several useful upper and lower bounds of the Black--Scholes implied volatility (IV) given the option premium and strike price. Using the bounds, \citep{tehranchi2016bound} rederived several IV asymptotics known in the literature~\citep{demarco2017shapes,gulisashvili2010asymptotic,tehranchi2009asymptotics}.

The purpose of this paper is two-fold. First, in Section~\ref{s:bounds}, we improve the bounds of \citep{tehranchi2016bound}. The bounds in this paper are summarized and compared with \citep{tehranchi2016bound} as follows:
\begin{gather*}
	\underbrace{-\kk \Big/ \Phi^{-1}\left(\dfrac{c}{1+e^\kk}\right)}_\text{\citep[Prop.~4.6]{tehranchi2016bound}}
	\le\; \underbrace{\vphantom{\bigg|} L_2(c) = d_1^{-1}\circ \Phi^{-1}(c)}_\text{\citep[Prop.~4.9]{tehranchi2016bound}} \,\le\,
	\underbrace{L_3(c) = d_1^{-1} \circ \Phi^{-1}\left(\frac{c\,(c+e^\kk)}{2c + e^\kk - 1}\right)}_\text{This paper (Cor.~\ref{c:l3})} \\
	\le\, \underbrace{\vphantom{\bigg|} L_{U23}(c) = J\circ U_{23}(c)}_\text{This paper (Cor.~\ref{c:l23})} \,\le\, \sigma \,\le\,
	\underbrace{U_{23}(c) = H\left(\min\left(\frac{1 + c}{2},\, c + e^\kk \Phi(-\sqrt{2\kk})\right)\right)}_\text{This paper (Cor.~\ref{c:u23})} \,\le \\
	\underbrace{U_3(c) = -\Phi^{-1}\left(\frac{1 - c}{2}\right) - \Phi^{-1}\left(\frac{1 - c}{2 e^\kk}\right)}_\text{This paper (Prop.~\ref{p:up_new})} \,\le\,
	\underbrace{U_1(c) = -2\,\Phi^{-1}\left(\frac{1 - c}{1 + e^\kk}\right)}_\text{\citep[Prop.~4.3]{tehranchi2016bound}}.
\end{gather*}
We systematically use the bounds of the option delta to discover tighter bounds. We also provide two volatility transforms that change a lower bound into an upper bound~($J(\cdot)$ in Proposition~\ref{p:up2lo}) and a lower bound into a tighter lower bound~($G(\cdot)$ in Proposition~\ref{p:nr_log}). The transforms complement the one in \citep[p.~914]{tehranchi2016bound} that changes an upper bound into a tighter upper bound.

Second, in Section~\ref{s:root}, we propose a new numerical root-finding method for IV. The naive Newton--Raphson (NR) method converges very slowly for out-of-the-money options with very low prices. We propose an alternative NR method based on the log price to solve the problem. We prove that the new NR method is guaranteed to converge to the root from below when the iteration begins from a lower bound. When a new lower bound, $L_3(c)$, is used as an initial guess, numerical experiments demonstrate that the new NR method converges uniformly fast to the root within only a few iterations. 

\section{Tighter bounds of implied volatility} \label{s:bounds}
\subsection{Standardized Black--Scholes formula} \noindent
The undiscounted Black--Scholes option price is given by
$$
C = \theta\left[F\, \Phi(\theta D_1) - K\,\Phi(\theta D_2)\right]
\qtext{for} D_{1,2} = \frac{\log(F/K)}{\Sigma\sqrt{T}} \pm \frac{\Sigma\sqrt{T}}{2},
$$
where $\Sigma$ is the volatility, $F$ is the forward stock price, $K$ is the strike price, $T$ is the time to maturity, $\Phi(\cdot)$ is the standard normal probability function, and $\theta=\pm 1$ indicates the call and put options, respectively. The range of the option value $C$ is
$$ [\,\theta(F-K)\,]^+\le C \le \min(K, F) + [\,\theta(F-K)\,]^+ = \left\{ \begin{matrix}
	K & (\theta=-1) \\
	F & (\theta=+1)
\end{matrix} \right. \;;$$
thus, we standardize $C$ by using
$$ \price(\sigma) = \frac{C - [\,\theta(F-K)\,]^+}{\min(K, F)} =
\Phi\circ d_1(\sigma) - e^\kk\,\Phi\circ d_2(\sigma) \qtext{with}
d_{1,2}(\sigma) = -\frac{\kk}{\sigma} \pm \frac{\sigma}2,
$$
where $\sigma$ is the standard deviation, and $k$ is the log standardized strike:
$$ \sigma = \Sigma \sqrt{T} ,\quad \kk = |\log(F/K)| \ge 0.
$$
Therefore, without loss of generality, it is sufficient to consider an out-of-the-money ($K \ge F$ or $\kk \ge 0$) call option ($\theta=1$) with $F = 1$ and $T = 1$. The standardized option price $\price(\cdot)$ is a bijection between $\sigma\in(0,\infty)$ and $c\in(0, 1)$. The bijection is a monotonically increasing S-shaped function with a unique inflection point at $\sigma=\sqrt{2k}$ (see Figure~\ref{fig:price} (upper left)). Therefore, the IV, $\sigma>0$, always exists for a given price, $0<c<1$. In the remainder of this paper, we let $c$ and $\sigma$ denote the standardized option price and corresponding IV, respectively, satisfying $\price(\sigma) = c$. Note that $d_1(\cdot)$ is a monotonically increasing function; thus, the inverse is well-defined by
\begin{equation} \label{eq:d1inv}
	d_1^{-1}(x) = x + \sqrt{x^2 + 2\kk} \quad (\;=2(x)^+ \qtext{if} \kk=0\;).
\end{equation}

The delta and vega of the standardized price,
$$ \bsdelta(\sigma) = \Phi\circ d_1(\sigma) \qtext{and} \bsvega(\sigma) = \phi\circ d_1(\sigma),
$$
are used later. In addition, $\phi(\cdot)$ is the standard normal density function. Throughout this paper, the functions (e.g., $\price$, $\bsdelta$, $d_1$, and $d_1^{-1}$) are considered as the univariate functions of $\sigma$. The log strike price $\kk$ is a parameter of the functions; thus, it is omitted unless the dependency on $\kk$ is explicitly required.

\begin{figure}[ht!]
	\caption{\label{fig:price} Standardized BS price $\price(\sigma)$, price-to-delta ratio $\pdelta(\sigma)$, log price $\log \price(\sigma)$, and price-to-vega ratio $\pvega(\sigma)$ as functions of $\sigma$. The blue circle marks the inflection point, $\sigma=\sqrt{2k}$, of $\price(\sigma)$.
	}
	\begin{center}
		\includegraphics[width=0.495\linewidth]{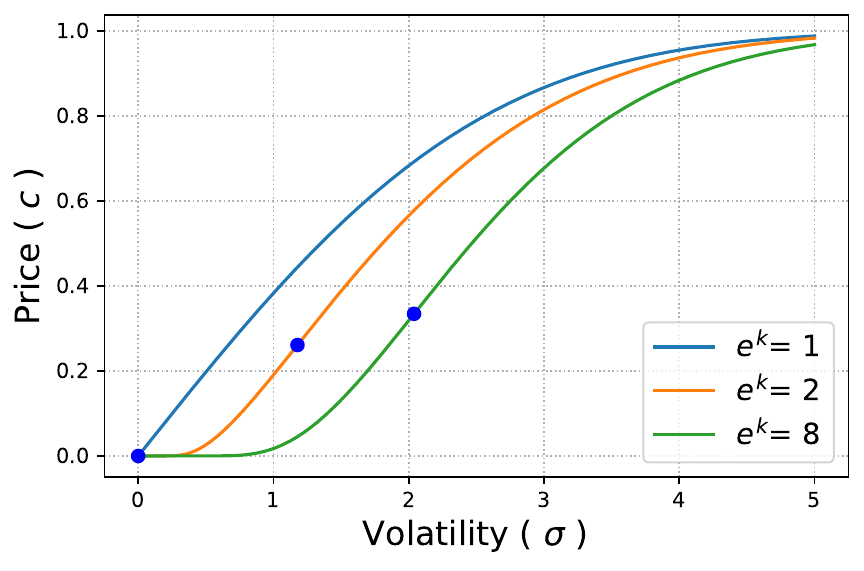}
		\includegraphics[width=0.495\linewidth]{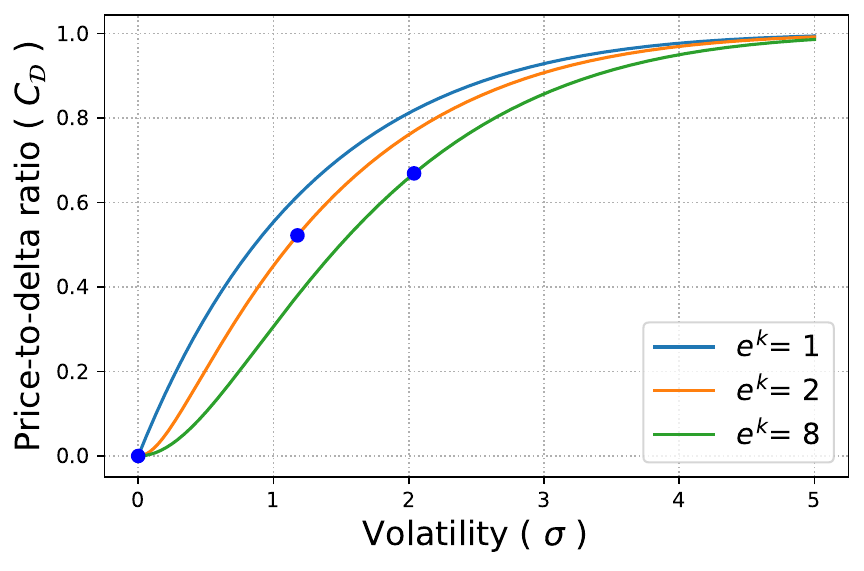}\\
		\includegraphics[width=0.495\linewidth]{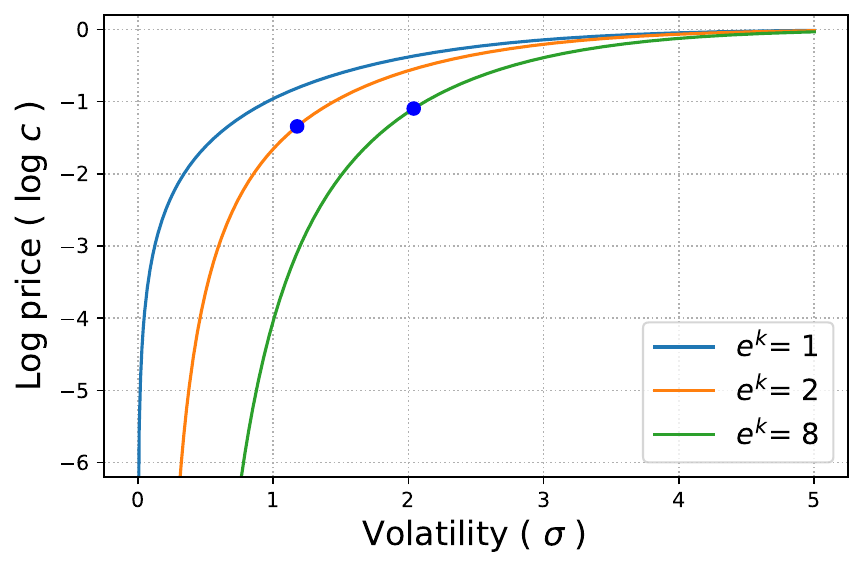}
		\includegraphics[width=0.495\linewidth]{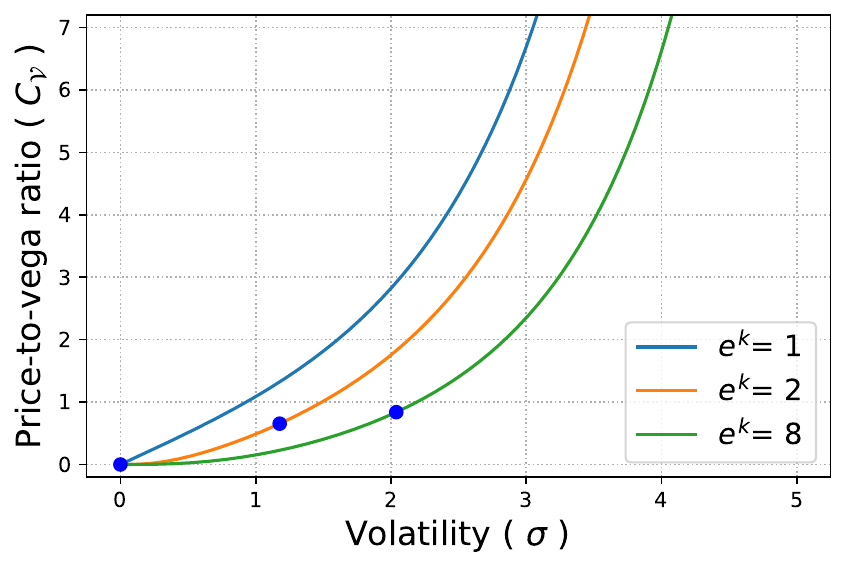}
	\end{center}
\end{figure}

\subsection{Bounds from \citet{tehranchi2016bound}} \label{ss:old} \noindent
We list some results of \citep{tehranchi2016bound} relevant to the new findings.
We also highlight the redundancy among the lower bounds in \citep{tehranchi2016bound}.
\begin{prop}[\mbox{\citet[Prop.~4.3]{tehranchi2016bound}}] \label{p:teh}
For a given option price $c$, the IV $\sigma$ is bounded between $L_1(c)$ and $U_1(c)$ (i.e., $L_1(c)\le \sigma\le U_1(c)$) defined below:
\begin{equation*}
	L_1(c) = -2\,\Phi^{-1}\left(\frac{1 - c}{2}\right)
	\qtext{and} U_1(c) = -2\,\Phi^{-1}\left(\frac{1 - c}{1 + e^\kk}\right).	
\end{equation*}
\end{prop}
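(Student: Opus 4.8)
The plan is to leverage the fact, already established above, that $\price$ is a strictly increasing bijection from $\sigma\in(0,\infty)$ onto $c\in(0,1)$; this reduces each bound to a single pointwise inequality on $\price$. Inverting the definitions, $L_1(c)$ is the unique root of $2\Phi(\sigma/2)-1=c$ and $U_1(c)$ is the unique root of $\Phi(\sigma/2)-e^{\kk}\Phi(-\sigma/2)=c$ (both right-hand sides being increasing in $\sigma$, and the identifications following from $\Phi^{-1}(1-p)=-\Phi^{-1}(p)$ together with $(1+e^{\kk})\Phi(-\sigma/2)=1-c$). Because $\price$ is increasing, the claim $L_1(c)\le\sigma\le U_1(c)$ is therefore equivalent to the two pointwise price inequalities, to be verified for every $\sigma>0$ and $\kk\ge0$:
\begin{equation*}
\price(\sigma)\;\le\;2\Phi(\sigma/2)-1
\qquad\text{and}\qquad
\price(\sigma)\;\ge\;\Phi(\sigma/2)-e^{\kk}\Phi(-\sigma/2).
\end{equation*}
I would first record well-definedness: since $c\in(0,1)$ and $e^{\kk}\ge1$, both $\frac{1-c}{2}$ and $\frac{1-c}{1+e^{\kk}}$ lie in $(0,\frac12)$, so the two $\Phi^{-1}$ values are negative and $L_1(c),U_1(c)>0$.

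For the lower inequality, the cleanest route is to show that $\price(\sigma)$ is strictly decreasing in the parameter $\kk$ for fixed $\sigma$. Differentiating $\price=\Phi\circ d_1-e^{\kk}\Phi\circ d_2$ in $\kk$ (with $\partial_{\kk}d_1=\partial_{\kk}d_2=-1/\sigma$) and invoking the vega identity $e^{\kk}\phi(d_2)=\phi(d_1)$ --- which follows from $d_1^2-d_2^2=-2\kk$ --- the two $\phi$-terms cancel and leave $\partial_{\kk}\price=-e^{\kk}\Phi(d_2)<0$. Since $\kk\ge0$, this yields $\price(\sigma)\le\price(\sigma)\big|_{\kk=0}=\Phi(\sigma/2)-\Phi(-\sigma/2)=2\Phi(\sigma/2)-1$, which is exactly the lower inequality.

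For the upper inequality I would rearrange it into $\Phi(\sigma/2)-\Phi(d_1)\le e^{\kk}\bigl[\Phi(-\sigma/2)-\Phi(d_2)\bigr]$ (both sides being nonnegative because $d_1\le\sigma/2$ and $d_2\le-\sigma/2$), and then write each side as an integral over $s\in[0,\kk/\sigma]$ after the substitutions $t=\sigma/2-s$ and $t=-\sigma/2-s$, which gives $\int_0^{\kk/\sigma}\phi(\sigma/2-s)\,ds\le e^{\kk}\int_0^{\kk/\sigma}\phi(\sigma/2+s)\,ds$. The Gaussian shift identity $\phi(\sigma/2-s)=e^{\sigma s}\phi(\sigma/2+s)$ rewrites the left integrand as $e^{\sigma s}\phi(\sigma/2+s)$, and since $e^{\sigma s}\le e^{\sigma\cdot(\kk/\sigma)}=e^{\kk}$ throughout the interval of integration, the inequality follows term-by-term under the integral sign.

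I expect the upper inequality to be the main obstacle: unlike the lower bound, it does not collapse to a one-line monotonicity argument, and the crux is to set up the integral comparison so that the exponential weight $e^{\sigma s}$ is dominated by precisely the constant $e^{\kk}$ on $[0,\kk/\sigma]$ --- the endpoint $s=\kk/\sigma$ is exactly where $e^{\sigma s}=e^{\kk}$, so the weighting is calibrated to leave no slack. The remaining steps (confirming monotonicity of the two implicit price functions and checking the range of the $\Phi^{-1}$ arguments) are routine.
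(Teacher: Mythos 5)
Your proof is correct, and for the upper bound it takes a genuinely different route from the paper. Both arguments first reduce the two bounds to the pointwise inequalities $\price(\sigma)\le 2\Phi(\sigma/2)-1$ and $\price(\sigma)\ge \Phi(\sigma/2)-e^{\kk}\Phi(-\sigma/2)$, and for the first of these the paper simply invokes the fact that the at-the-money option dominates the out-of-the-money one, which your computation $\partial_{\kk}\price=-e^{\kk}\Phi(d_2)<0$ makes explicit --- same idea, just carried out analytically. The real divergence is in the second inequality: the paper obtains it by a one-line financial argument (exercising suboptimally at the at-the-money boundary $\kk^*=0$ while keeping the contractual strike $e^{\kk}$ can only lower the expected payoff, so $\Phi(\sigma/2)-e^{\kk}\Phi(-\sigma/2)\le\price(\sigma;\kk)$), whereas you prove the same inequality by writing both deficits $\Phi(\sigma/2)-\Phi(d_1)$ and $\Phi(-\sigma/2)-\Phi(d_2)$ as integrals over $s\in[0,\kk/\sigma]$ and using the Gaussian shift identity $\phi(\sigma/2-s)=e^{\sigma s}\phi(\sigma/2+s)$ together with $e^{\sigma s}\le e^{\kk}$ on that interval. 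Your version is self-contained and purely computational, requiring no probabilistic interpretation of the Black--Scholes formula, and it makes visible exactly where the constant $e^{\kk}$ is sharp (at the endpoint $s=\kk/\sigma$); the paper's version is shorter and generalizes immediately to any suboptimal exercise boundary $\kk^*$, which is how Tehranchi's Theorem 3.1 (Proposition~\ref{p:up_d1d2} here) arises. All the supporting steps you flag --- monotonicity of the two implicit price functions and the range check on the $\Phi^{-1}$ arguments --- are handled correctly.
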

\begin{pf}
The lower bound is derived from the fact that an at-the-money ($\kk=0$) option is more valuable than an out-of-the-money option:
\begin{gather*}
	c = \price(\sigma) \le \price(\sigma; \kk=0) = \Phi\left(\sigma/2\right) - \Phi\left(-\sigma/2\right) = 1-2\,\Phi\left(-\sigma/2\right) \\
	L_1(c) = - 2\,\Phi^{-1}\left(\frac{1 - c}{2}\right) = 2\,\Phi^{-1}\left(\frac{1 + c}{2}\right). \label{eq:lower1}
\end{gather*}
For the upper bound, imagine that the option holder can make an exercise decision regardless of the option's actual moneyness at maturity, and that she exercises the option \textit{incorrectly} when $S_T\ge K^*$ which is not necessarily equal to the contractual strike $K$. Then, the option value to the holder is always lower than the \textit{correctly} exercised option price because the payoff under \textit{incorrect} exercise is always lower:
$$ (S_T - K) \cdot 1_{S_T\ge K^*} \le (S_T - K)^+,
$$
where the equality holds only if $K=K^*$. With $K^*=F$ (i.e., incorrectly exercised when $S_T\ge F$), we obtain the upper bound of $\sigma$:
\begin{gather*}
\mathbb{E}\left((S_T - K) \cdot 1_{S_T\ge F}\right) = \Phi(\sigma/2) - e^\kk\,\Phi(-\sigma/2) = 1 - (1 + e^\kk)\,\Phi(-\sigma/2) \le c, \\
U_1(c) = -2\,\Phi^{-1}\left(\frac{1 - c}{1 + e^\kk}\right) = 2\,\Phi^{-1}\left(\frac{c + e^\kk }{1 + e^\kk}\right). \quad \square
\end{gather*}
\end{pf}

\begin{prop}[\mbox{\citet[Prop.~4.9]{tehranchi2016bound}}] \label{p:l2}
	Below, $L_2(c)$ is a lower bound of $\sigma$:
	\begin{equation*}
		L_2(c) = d_1^{-1}\circ \Phi^{-1}(c),
	\end{equation*}
	where $d_1^{-1}(\cdot)$ is defined in Eq.~\eqref{eq:d1inv}.
\end{prop}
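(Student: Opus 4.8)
The plan is to reduce the inequality $L_2(c)\le\sigma$ to an elementary bound on the standardized delta $\bsdelta(\sigma)=\Phi\circ d_1(\sigma)$, exactly in the spirit announced in the abstract (that the lower bounds are obtained \emph{from the bounds of the option delta}). The starting observation is that the subtracted term in the standardized price is strictly positive. Since $e^\kk\ge 1>0$ and $\Phi\circ d_2(\sigma)>0$ for every finite $\sigma>0$, we have
$$
c=\price(\sigma)=\Phi\circ d_1(\sigma)-e^\kk\,\Phi\circ d_2(\sigma)<\Phi\circ d_1(\sigma)=\bsdelta(\sigma).
$$
In words, the option price is always strictly dominated by its delta; this single inequality carries the whole argument.

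From here I would simply chase the inequality through the two monotone inverses. Because $\Phi$ is strictly increasing, applying $\Phi^{-1}$ to $c<\Phi\circ d_1(\sigma)$ yields $\Phi^{-1}(c)<d_1(\sigma)$. Next, $d_1(\cdot)$ is monotonically increasing (as noted before Eq.~\eqref{eq:d1inv}), so its inverse $d_1^{-1}(\cdot)$ from Eq.~\eqref{eq:d1inv} is well defined on all of $\mathbb{R}$ and is likewise increasing; applying it gives
$$
L_2(c)=d_1^{-1}\circ\Phi^{-1}(c)<d_1^{-1}\circ d_1(\sigma)=\sigma,
$$
which is the claimed lower bound (in fact strict for every $\kk\ge 0$, including the degenerate at-the-money case where $d_1^{-1}(x)=2x^+$).

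There is essentially no hard step here: the only thing to be careful about is bookkeeping the direction of each monotonicity and checking that $\Phi^{-1}(c)$ lies in the domain on which $d_1^{-1}$ is applied, which is automatic since $d_1^{-1}$ is defined for all reals via Eq.~\eqref{eq:d1inv}. The content of the proposition is therefore entirely contained in the trivial delta estimate $c<\bsdelta(\sigma)$, and the remainder is a two-line composition of increasing functions. I would present it in precisely that order—first establish $c<\bsdelta(\sigma)$, then invert through $\Phi^{-1}$ and $d_1^{-1}$—so that the later tighter lower bound $L_3$ (Corollary~\ref{c:l3}), which sharpens the crude estimate $c<\bsdelta(\sigma)$ into a genuine lower bound on $\bsdelta(\sigma)$, slots in as a direct strengthening of this same template.
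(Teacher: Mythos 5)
Your proof is correct and follows essentially the same route as the paper: the key inequality $c \le \Phi\circ d_1(\sigma)$ (the price is dominated by the delta because the subtracted term $e^\kk\,\Phi\circ d_2(\sigma)$ is positive), followed by inverting through the monotone maps $\Phi^{-1}$ and $d_1^{-1}$. The paper states this in one line; your version merely spells out the monotonicity bookkeeping and correctly anticipates the role this delta bound plays for the later, tighter bound $L_3(c)$.
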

\begin{pf}
It immediately follows from the monotonicity of $\Phi\circ d_1(\cdot)$ and the inequality,
$$ c = \Phi(d_1) - e^\kk \Phi(d_2) \le \Phi\circ d_1(\sigma). \quad \square$$
\end{pf}
The proof uses a lower bound of delta: $c\le \bsdelta(\sigma)$. The bounds of delta play a key role in interpreting the results of \citep{tehranchi2016bound} and discovering new results in this paper.

In the remarks below, we discuss the merits of $L_2(c)$ not discovered in \citep{tehranchi2016bound}. \citet[Prop.~4.6]{tehranchi2016bound} introduced another lower bound derived from the upper bound, $U_1(c)$:
\begin{equation} \label{eq:l_inv}
	\frac{2\kk}{U_1(1 - c)} = \frac{-\kk}{\Phi^{-1}\left(\frac{c}{1+e^\kk}\right)} \le \sigma.
\end{equation}
However, this lower bound is redundant because it is always inferior to $L_2(c)$. The next two remarks prove it.
\begin{rem}
The lower bound $L_2(c)$ satisfies an interesting symmetry,
$ L_2(c)\, L_2(1 - c) = 2\kk$,
because
$$ L_2(1 - c) = d_1^{-1}\circ \Phi^{-1}(1 - c) = d_1^{-1}(-\Phi^{-1}(c)) = \frac{2\kk}{\Phi^{-1}(c) + \sqrt{\Phi^{-1}(c)^2 + 2\kk}} = \frac{2\kk}{L_2(c)}.
$$
This symmetry provides an alternative proof of the inequality in \citep[Remark~4.7]{tehranchi2016bound}:
$$ 2\kk = L_2(c)\, L_2(1 - c) \le \sigma(c)\, \sigma(1 - c).$$
\end{rem}
\begin{rem}
	The bound $L_2(c)$ is tighter than any lower bound in the form of $2k/U(1-c)$, including that in Eq.~\eqref{eq:l_inv}. For any upper bound $U(c)$, the symmetry of $L_2(c)$ implies
	$$ \frac{2\kk}{U(1 - c)} \le \frac{2\kk}{L_2(1 - c)} = L_2(c) \le \sigma.
	$$
	Therefore, $L_2(c)$ can also be used for the limit of $c\downarrow 0$ with a fixed $\kk$~\citep[Prop.~4.6 and Thm.~4.4]{tehranchi2016bound} and for the limit of $c\downarrow 0$ and $\kk\uparrow\infty$~\citep[Prop.~4.9 and Thm.~4.8]{tehranchi2016bound}. In contrast, Proposition~\ref{p:up2lo} later offers a transformation that changes an upper bound into a lower bound that is always tighter than $L_2(c)$.
\end{rem}
\begin{rem}
	In addition to the $c\downarrow 0$ limits, $L_2(c)$ is also an accurate approximation for the $c\uparrow 1$ limit. Based on the proof, $L_2(c)$ is accurate when $\Phi(d_1) \gg e^\kk \Phi(d_2)$, which is when $\sigma\uparrow \infty$ or $c\uparrow 1$. Indeed, $L_2(c)$ is a better lower bound than $L_1(c)$, which has no dependency on $k$, as $c\uparrow 1$ for large $k$ values (see Figure~\ref{fig:bound_p}).
\end{rem}

We switch focus to the upper bounds. \citet[Thm.~3.1]{tehranchi2016bound} stated a generic method of constructing an upper bound. We present the theorem without proof, but reorganized via the option delta.
\begin{prop}[\mbox{\citet[Thm.~3.1]{tehranchi2016bound}}] \label{p:up_d1d2} The following is an upper bound of $\sigma$:
\begin{equation} \label{eq:FD}
	H(\mathcal{D}) = \Phi^{-1}(\mathcal{D}) - \Phi^{-1}\left(\frac{\mathcal{D} - c}{e^\kk}\right) \qtext{for} c < \mathcal{D} < 1.
\end{equation}
\end{prop}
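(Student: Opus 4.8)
The plan is to recognize that $H$, viewed as a function of the free parameter $\mathcal{D}\in(c,1)$, attains its minimum exactly at the true option delta, where it reproduces $\sigma$. First I would establish the \emph{exact} identity $\sigma = H(\bsdelta)$. From the price equation $c = \price(\sigma) = \Phi\circ d_1(\sigma) - e^\kk\,\Phi\circ d_2(\sigma)$ we read off $\Phi(d_1) = \bsdelta(\sigma)$ and therefore $\Phi(d_2) = (\bsdelta - c)/e^\kk$; inverting $\Phi$ gives $d_1 = \Phi^{-1}(\bsdelta)$ and $d_2 = \Phi^{-1}((\bsdelta - c)/e^\kk)$, and since $d_1 - d_2 = \sigma$ by the definition of $d_{1,2}$, we obtain $\sigma = \Phi^{-1}(\bsdelta) - \Phi^{-1}((\bsdelta - c)/e^\kk) = H(\bsdelta)$. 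Because $e^\kk\Phi(d_2) > 0$ forces $c < \bsdelta < 1$, the value $\mathcal{D} = \bsdelta$ lies in the admissible range, so the proposition reduces to showing that $\bsdelta$ minimizes $H$ over $(c,1)$.

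Next I would set $x = \Phi^{-1}(\mathcal{D})$ and $y = \Phi^{-1}((\mathcal{D}-c)/e^\kk)$, so that $H(\mathcal{D}) = x - y$ with $x > y$ (since $\mathcal{D} > (\mathcal{D}-c)/e^\kk$). Using $\frac{d}{d\mathcal{D}}\Phi^{-1}(\mathcal{D}) = 1/\phi(x)$, the derivative is $H'(\mathcal{D}) = 1/\phi(x) - 1/(e^\kk\phi(y))$, whose sign equals that of $g(\mathcal{D}) := e^\kk\phi(y) - \phi(x)$. I would verify directly that $\mathcal{D} = \bsdelta$ is a critical point: there $x = d_1$ and $y = d_2$, and the Gaussian densities reduce $g = 0$ to $d_2^2 - d_1^2 = 2\kk$, which holds because $d_2 - d_1 = -\sigma$ and $d_2 + d_1 = -2\kk/\sigma$.

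The decisive step is to classify this critical point as the global minimum. Differentiating $g$ and exploiting the identity $\frac{d}{d\mathcal{D}}\phi(\Phi^{-1}(\mathcal{D})) = -\Phi^{-1}(\mathcal{D})$ (from $\phi'=-x\phi$ together with the chain rule), the two density terms collapse to the remarkably simple form $g'(\mathcal{D}) = x - y = H(\mathcal{D}) > 0$ on all of $(c,1)$. Hence $g$ is strictly increasing and vanishes only at $\mathcal{D} = \bsdelta$, so $H$ decreases on $(c,\bsdelta)$ and increases on $(\bsdelta,1)$; consequently $H(\mathcal{D}) \ge H(\bsdelta) = \sigma$ for every admissible $\mathcal{D}$, which is the claim. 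I expect the only real obstacle to be spotting the clean cancellation $g' = H$; once that Gaussian identity is in hand, the unimodality of $H$, and with it the entire argument, follows immediately, leaving the remaining computations routine.
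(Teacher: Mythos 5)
The paper states this proposition without proof, deferring entirely to Tehranchi's Theorem~3.1, so there is no in-paper argument to compare against; your proof is correct and complete on its own. The identity $\sigma = H(\bsdelta(\sigma))$, the derivative $H'(\mathcal{D}) = 1/\phi(x) - 1/(e^\kk\phi(y))$ whose sign is that of $g(\mathcal{D}) = e^\kk\phi(y)-\phi(x)$, the verification $g(\bsdelta)=0$ via $\phi(d_1)=e^\kk\phi(d_2)$, and the cancellation $g'(\mathcal{D}) = x - y = H(\mathcal{D}) > 0$ all check out, and together they show that $H$ decreases on $(c,\bsdelta)$ and increases on $(\bsdelta,1)$, so $H(\mathcal{D})\ge H(\bsdelta)=\sigma$. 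Your route is genuinely different from the one underlying Tehranchi's original proof, which this paper reuses to prove $U_1(c)$ in Proposition~\ref{p:teh}: there one observes that suboptimal exercise lowers the price, i.e.\ $\Phi(a) - e^\kk\Phi(a-\sigma) \le \price(\sigma) = c$ for every $a$, and then sets $a = \Phi^{-1}(\mathcal{D})$ and rearranges to get $\sigma \le H(\mathcal{D})$ directly; that argument is shorter, needs only the first-order optimality of $a=d_1$, and avoids the second differentiation. What your calculus buys in exchange is the sharper structural fact that $\mathcal{D}\mapsto H(\mathcal{D})$ is unimodal with its global minimum exactly at the true delta, which substantiates Remark~\ref{r:HD}'s reading of $\mathcal{D}$ as a delta estimate and explains quantitatively why better delta estimates yield tighter upper bounds.
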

\begin{rem} \label{r:HD}
In this presentation, $\mathcal{D}$ is understood as an estimation of delta, $\bsdelta(\sigma)$, because $\sigma=H(\mathcal{D})$ when $\mathcal{D} = \bsdelta(\sigma)$. Moreover, \citep[p.~914]{tehranchi2016bound} stated that $H\circ \bsdelta(\cdot)$ is a transformation that changes an upper bound $U(c)$ into a tighter bound:
$$ \sigma \le H\circ \bsdelta(U(c)) \le U(c).$$
\end{rem}
\begin{cor}[\mbox{\citet[Prop.~4.12]{tehranchi2016bound}}] \label{c:inf} Below, $U_2(\sigma)$ is an upper bound of $\sigma$:
	$$ U_2(c) = \Phi^{-1}\left(c + e^\kk \Phi(-\sqrt{2\kk})\right) + \sqrt{2\kk} \qtext{for} 0<c<1 - e^\kk \Phi(-\sqrt{2\kk})$$
\end{cor}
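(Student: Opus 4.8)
The plan is to read this off directly from Proposition~\ref{p:up_d1d2}. Since $H(\mathcal{D})$ is an upper bound of $\sigma$ for \emph{every} admissible $\mathcal{D}\in(c,1)$, and since (by Remark~\ref{r:HD}) $\mathcal{D}$ plays the role of a delta estimate, it suffices to exhibit one specific choice of $\mathcal{D}$ for which $H(\mathcal{D})$ collapses to the stated expression $U_2(c)$. So the entire problem reduces to guessing the right $\mathcal{D}$ and then verifying feasibility.

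To find $\mathcal{D}$, I would reverse-engineer the formula. The target $U_2(c)=\Phi^{-1}(c+e^\kk\Phi(-\sqrt{2\kk}))+\sqrt{2\kk}$ already has the shape $\Phi^{-1}(\cdots)-\Phi^{-1}(\cdots)$ of $H$, so I would match the two $\Phi^{-1}$ arguments slot by slot. The first slot forces $\mathcal{D}=c+e^\kk\Phi(-\sqrt{2\kk})$, and then the second slot needs $(\mathcal{D}-c)/e^\kk=\Phi(-\sqrt{2\kk})$, i.e. $\Phi^{-1}((\mathcal{D}-c)/e^\kk)=-\sqrt{2\kk}$, which reproduces the trailing $+\sqrt{2\kk}$. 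This same $\mathcal{D}$ has a clean interpretation consistent with the paper's delta-centric reorganization: from $c=\Phi\circ d_1-e^\kk\,\Phi\circ d_2$ one has $\bsdelta(\sigma)=c+e^\kk\,\Phi(d_2(\sigma))$, and evaluating $d_2$ at the inflection point $\sigma=\sqrt{2\kk}$ (where $d_1=0$ and $d_2=-\sqrt{2\kk}$) gives exactly $\mathcal{D}=c+e^\kk\Phi(-\sqrt{2\kk})$. Thus $\mathcal{D}$ is the natural delta estimate obtained by freezing $d_2$ at the inflection point.

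The remaining step is to check the admissibility constraint $c<\mathcal{D}<1$ required by Proposition~\ref{p:up_d1d2}. The lower inequality $c<\mathcal{D}$ is automatic because $e^\kk\Phi(-\sqrt{2\kk})>0$ for all $\kk\ge 0$. The upper inequality $\mathcal{D}<1$ is equivalent to $c<1-e^\kk\Phi(-\sqrt{2\kk})$, which is precisely the hypothesis imposed on $c$ in the statement. With feasibility in hand, substituting $\mathcal{D}=c+e^\kk\Phi(-\sqrt{2\kk})$ into $H$ and using $\Phi^{-1}(\Phi(-\sqrt{2\kk}))=-\sqrt{2\kk}$ yields $H(\mathcal{D})=U_2(c)$, completing the argument.

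I expect no genuine analytical obstacle here, since Proposition~\ref{p:up_d1d2} does the heavy lifting; the only creative step is spotting the substitution $\mathcal{D}=c+e^\kk\Phi(-\sqrt{2\kk})$. The one mild subtlety worth stating carefully is that the stated domain for $c$ is not an extraneous restriction but is exactly the condition that keeps $\mathcal{D}<1$, so that $\Phi^{-1}(\mathcal{D})$ remains well-defined and the corollary falls within the scope of Proposition~\ref{p:up_d1d2}.
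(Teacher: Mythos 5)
Your proof is correct and takes essentially the same route as the paper: both obtain the bound by instantiating Proposition~\ref{p:up_d1d2} at $\mathcal{D}=c+e^{k}\,\Phi(-\sqrt{2k})$ and noting that the domain restriction on $c$ is exactly the condition $\mathcal{D}<1$. The only (cosmetic) difference is that the paper also verifies that this $\mathcal{D}$ dominates the true delta, i.e.\ $\Phi(d_2(\sigma))\le\Phi(-\sqrt{2k})$ since $d_2\le-\sqrt{2k}$, which explains why the bound is attained at the inflection point $\sigma=\sqrt{2k}$ --- the same interpretation you reach by ``freezing $d_2$ at the inflection point.''
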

\begin{pf}
	Interpreting the proof in \citep{tehranchi2016bound} in terms of the delta bound, we obtain the following upper bound of delta:
	$$\bsdelta(\sigma) = \Phi(d_1) = c + e^\kk \Phi(d_2) \le c + e^\kk \Phi(-\sqrt{2\kk}).
	$$
	From this upper bound, we obtain $U_2(c) = H(c + e^\kk \Phi(-\sqrt{2\kk}))$. The equality holds at the inflection point, $\sigma=\sqrt{2k}$, because $d_2 = -\sqrt{2\kk}$ and $c + e^\kk \Phi(-\sqrt{2\kk})$ is the true delta when $\sigma=\sqrt{2k}$. $\square$
\end{pf}

\subsection{New bounds} \label{ss:new} \noindent
We obtain the tighter upper bounds of $\sigma$ by applying new upper bounds of delta to Proposition~\ref{p:up_d1d2}.
\begin{prop}[A tighter upper bound] \label{p:up_new} Below, $U_3(c)$ is an upper bound of $\sigma$:
	$$ U_3(c) = -\Phi^{-1}\left(\frac{1 - c}{2}\right) - \Phi^{-1}\left(\frac{1 - c}{2e^\kk}\right).
	$$ 
	Moreover, $U_3(c)$ is tighter than $U_1(c)$ (i.e., $\sigma \le U_3(c) \le U_1(c)$).
\end{prop}
\begin{pf}
	The delta, $\Phi\circ d_1(\sigma)$, is monotonically decreasing in $\kk$. Since we only consider $\kk\ge 0$, delta takes the maximum value $\Phi(\sigma/2)$ when $\kk=0$. Because $c = 2\,\Phi(\sigma/2) - 1$ when $\kk=0$, we obtain the upper bound of delta:
	$$ \bsdelta(\sigma) = \Phi\circ d_1(\sigma) \le \Phi(\sigma/2) = \frac{1 + c}{2}.
	$$ 
	We obtain $U_3(c) = H(\frac{1 + c}{2})$.
	To compare with $U_1(c)$, consider a function:
	$$ F(x) = -\Phi^{-1}\left(\frac{1-c}{2\,x}\right) \quad (x\ge 1).
	$$
	This function is concave on $x\ge 1$ for any $0<c<1$. Therefore,
	$$ U_3(c) = F(1) + F(e^{\kk}) \le 2F\left(\frac{1+e^{\kk}}{2}\right) = U_1(c). \quad \square
	$$
\end{pf}

\begin{cor} \label{c:u23}
	Both $U_2(c)$ and $U_3(c)$ are derived from the delta upper bounds; thus, we can merge $U_2(c)$ and $U_3(c)$, to generate an even tighter upper bound of $\sigma$:
	$$ U_{23}(c) = H\left(\min\left(\frac{1 + c}{2},\, c + e^\kk \Phi(-\sqrt{2\kk})\right)\right),
	$$
	where $H(\cdot)$ is defined in Eq.~\eqref{eq:FD}. Equivalently, $U_{23}(c) = \min(U_2(c), U_3(c))$.
\end{cor}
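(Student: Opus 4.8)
The plan is to merge the two delta upper bounds and push the result through $H(\cdot)$. Corollary~\ref{c:inf} established $\bsdelta(\sigma)\le c+e^\kk\Phi(-\sqrt{2\kk})$, and Eq.~\eqref{eq:delta_up} in Proposition~\ref{p:up_new} established $\bsdelta(\sigma)\le\frac{1+c}{2}$. Writing $\mathcal{D}^\ast=\min\bigl(\frac{1+c}{2},\,c+e^\kk\Phi(-\sqrt{2\kk})\bigr)$, the minimum of two upper bounds of delta is again an upper bound, so $\bsdelta(\sigma)\le\mathcal{D}^\ast$. Since $\frac{1+c}{2}<1$ and both arguments strictly exceed $c$, we have $c<\mathcal{D}^\ast<1$, and Proposition~\ref{p:up_d1d2} immediately yields that $U_{23}(c)=H(\mathcal{D}^\ast)$ is an upper bound of $\sigma$. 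This disposes of the ``upper bound'' claim with no extra work.

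The remaining content is the identity $U_{23}(c)=\min(U_2(c),U_3(c))$ together with the tightness $U_{23}\le U_2,\,U_3$. Both reduce to showing that $H(\cdot)$ is monotonically increasing on the interval $[\bsdelta(\sigma),\,1)$, which contains $\mathcal{D}^\ast$, $c+e^\kk\Phi(-\sqrt{2\kk})$, and $\frac{1+c}{2}$. Once this is established, monotonicity lets $H$ pass through the minimum, giving $H(\mathcal{D}^\ast)=\min(H(\tfrac{1+c}{2}),H(c+e^\kk\Phi(-\sqrt{2\kk})))=\min(U_2,U_3)$ and simultaneously $H(\mathcal{D}^\ast)\le U_2,\,U_3$.

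To prove the monotonicity I would differentiate. With $a=\Phi^{-1}(\mathcal{D})$ and $b=\Phi^{-1}\bigl(\frac{\mathcal{D}-c}{e^\kk}\bigr)$, Eq.~\eqref{eq:FD} gives $H'(\mathcal{D})=1/\phi(a)-1/(e^\kk\phi(b))$, so $H'(\mathcal{D})\ge0\iff e^\kk\phi(b)\ge\phi(a)\iff a^2-b^2\ge-2\kk$. By Proposition~\ref{p:up_d1d2} and Remark~\ref{r:HD}, $\sigma=H(\bsdelta(\sigma))$ is the global minimum of $H$ on $(c,1)$, so $\bsdelta(\sigma)$ is a stationary point; consistently, at $\mathcal{D}=\bsdelta(\sigma)$ one has $a=d_1$, $b=d_2$, whence $a^2-b^2=(d_1-d_2)(d_1+d_2)=\sigma\cdot(-2\kk/\sigma)=-2\kk$, confirming $H'(\bsdelta(\sigma))=0$.

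The main obstacle is ruling out further sign changes of $H'$ to the right of $\bsdelta(\sigma)$, i.e.\ proving the stationary point is unique. Set $\chi(\mathcal{D})=a^2-b^2+2\kk$, so that $H'\ge0\iff\chi\ge0$. At any zero of $\chi$ we have $e^\kk\phi(b)=\phi(a)$, and a short computation gives $\chi'(\mathcal{D})=2a/\phi(a)-2b/(e^\kk\phi(b))=2(a-b)/\phi(a)=2H(\mathcal{D})/\phi(a)>0$, since $a>b$ forces $H(\mathcal{D})=a-b>0$. Thus every zero of $\chi$ is a strict upcrossing, and a continuous function all of whose zeros are upcrossings can vanish at most once; as $\bsdelta(\sigma)$ is such a zero, it is the unique one, so $\chi$ (hence $H'$) is negative on $(c,\bsdelta(\sigma))$ and positive on $(\bsdelta(\sigma),1)$. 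This delivers the required monotonicity on $[\bsdelta(\sigma),1)$ and closes the proof.
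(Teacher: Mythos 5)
Your proof is correct and follows the route the paper intends (the corollary is stated without an explicit proof): the minimum of the two delta upper bounds from Corollary~\ref{c:inf} and Eq.~\eqref{eq:delta_up} is again an upper bound of $\bsdelta(\sigma)$ lying in $(c,1)$, and Proposition~\ref{p:up_d1d2} turns it into an upper bound of $\sigma$. The one substantive addition is your verification that $H$ is unimodal on $(c,1)$ with its minimum at $\bsdelta(\sigma)$, hence increasing on $[\bsdelta(\sigma),1)$; this is exactly what is needed to justify the identity $H(\min(\cdot,\cdot))=\min(U_2(c),U_3(c))$, which the paper asserts with the single word ``equivalently,'' and your upcrossing argument for the unique zero of $a^2-b^2+2\kk$ is sound.
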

\begin{rem}
	As a minimum is taken along with $(1 + c)/2 < 1$, the restriction of $c$ in $U_2(c)$ (i.e., $0 < c < 1 - e^\kk \Phi(-\sqrt{2\kk})$) is no longer necessary in $U_{23}(c)$.
\end{rem}

Next, we explore the lower bounds of $\sigma$. These results depend on the following lemma on the ratio of the Black--Scholes price $\price(\sigma)$ to delta $\bsdelta(\sigma)$. The lemma is related to the Mills ratio $R(x)$, which is a heavily studied subject (see \citep{baricz2008mills} and \citep{sampford1953inequalities}):
$$ R(x) = \frac{\Phi(-x)}{\phi(x)}.$$
It is well known that $R(x)$ is a strictly decreasing and convex function of $x$~\citep{baricz2008mills}.

\begin{lemma} \label{lem:1} The price-to-delta ratio defined by
	$$ \pdelta(\sigma) = \frac{\price(\sigma)}{\bsdelta(\sigma)} = 1 - e^\kk \frac{\Phi\circ d_2(\sigma)}{\Phi\circ d_1(\sigma)} = 1 - \frac{R(-d_2(\sigma))}{R(-d_1(\sigma))},
	$$
	is monotonically increasing on $\sigma>0$ and monotonically decreasing on $\kk\ge 0$.
\end{lemma}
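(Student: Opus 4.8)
The plan is to treat $\sigma$ and $\kk$ as two separate variables and reduce each monotonicity claim to a classical Mills-ratio inequality, using the identity $R'(x) = xR(x) - 1$ throughout. For the $\sigma$-direction I would differentiate $\pdelta = \price/\bsdelta$ directly. Since $\price'(\sigma) = \bsvega(\sigma) = \phi(d_1)$ and $\bsdelta'(\sigma) = \phi(d_1)\,d_1'(\sigma)$ with $d_1'(\sigma) = \kk/\sigma^2 + 1/2$, the quotient rule gives $\pdelta' = \phi(d_1)\,[\bsdelta - \price\,d_1']/\bsdelta^2$, so the sign of $\pdelta'$ is that of $\bsdelta - \price\,d_1' = \Phi(d_1) - (\Phi(d_1) - e^\kk\Phi(d_2))(\kk/\sigma^2 + 1/2)$. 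Writing $a = -d_1 = \kk/\sigma - \sigma/2$ and $b = -d_2 = \kk/\sigma + \sigma/2$ (so $b - a = \sigma$ and $a + b = 2\kk/\sigma$) and substituting $\Phi(d_1) = R(a)\phi(d_1)$ and $e^\kk\Phi(d_2) = R(b)\phi(d_1)$, a short computation collapses this bracket to $\frac{\phi(d_1)}{b-a}[\,bR(b) - aR(a)\,]$. Hence $\pdelta'(\sigma) > 0$ reduces to showing that $x\mapsto xR(x)$ is increasing.

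That last fact is immediate from the convexity of $R$ already recorded in the text: differentiating $R'(x) = xR(x) - 1$ gives $R''(x) = R(x) + xR'(x) = \frac{d}{dx}[xR(x)]$, so $\frac{d}{dx}[xR(x)] = R''(x) > 0$. As $b > a$ this yields $bR(b) > aR(a)$ and therefore $\pdelta'(\sigma) > 0$ for every $\sigma > 0$ and all $\kk \ge 0$. Note that $a$ may be negative when $\sigma > \sqrt{2\kk}$, but the convexity of $R$ holds on all of $\mathbb{R}$, so no case split is needed. For the dependence on $\kk$ I would instead fix $\sigma$ and differentiate the compact form $\pdelta = 1 - R(b)/R(a)$. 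Because $a_\kk = b_\kk = 1/\sigma$, one obtains $\partial_\kk \pdelta = -\frac{1}{\sigma R(a)^2}[\,R'(b)R(a) - R(b)R'(a)\,]$, whose sign is opposite to that of $\frac{R'(b)}{R(b)} - \frac{R'(a)}{R(a)} = (\log R)'(b) - (\log R)'(a)$. Since $b > a$, the desired inequality $\partial_\kk \pdelta < 0$ is therefore equivalent to the strict convexity of $\log R$, i.e.\ to $R$ being log-convex.

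The remaining and main step is to establish this log-convexity. Using $R' = xR - 1$ and $R'' = (1+x^2)R - x$, a direct computation gives $R''R - (R')^2 = R^2 + xR - 1$, so $(\log R)'' > 0$ is equivalent to the quadratic inequality $R(x)^2 + xR(x) - 1 > 0$, that is, $R(x) > \frac{2}{x + \sqrt{x^2+4}}$. This is precisely the classical lower Mills-ratio bound of Sampford \cite{sampford1953inequalities}, valid for all real $x$; invoking it closes the argument. I expect the log-convexity reduction to be the crux: the $\sigma$-direction falls out of plain convexity of $R$, whereas the $\kk$-direction genuinely requires the sharper log-convex property, which is why the Sampford bound, rather than mere convexity, is needed.
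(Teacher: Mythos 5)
Your proof is correct, and although the packaging differs from the paper's, it ultimately rests on the same two classical Mills-ratio facts. For the $\sigma$-direction the paper differentiates the quotient $R(-d_2)/R(-d_1)$, while you apply the quotient rule to $\price/\bsdelta$ directly; both computations collapse to the single inequality $bR(b)>aR(a)$ with $a=-d_1<b=-d_2$, i.e.\ to $x\mapsto xR(x)$ being increasing. Your justification of that fact, $\frac{d}{dx}\left[xR(x)\right]=R(x)+xR'(x)=R''(x)>0$, is cleaner than the paper's one-line remark that ``$R(-x)$ increasing implies $xR(-x)$ increasing,'' which as literally stated is not a valid inference when the argument is negative; your convexity identity is what actually makes the step airtight, so your version quietly patches a small informality. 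For the $\kk$-direction the paper writes the derivative as a multiple of $1-m'(\xi)$ via the mean value theorem applied to $m=1/R$ and invokes Sampford's $0<m'(x)<1$, whereas you reduce the claim to log-convexity of $R$, verified through $R''R-(R')^2=R^2+xR-1>0$, equivalently $R(x)>2/(x+\sqrt{x^2+4})$. These are the same inequality in disguise: since $m'(x)=(1-xR)/R^2$, the condition $m'(x)<1$ is precisely $R^2+xR-1>0$, so your ``log-convexity of $R$'' and the paper's Sampford bound coincide, and both are legitimately sourced (the bound also appears as the log-convexity of the Mills ratio in Baricz). What your framing buys is conceptual transparency --- monotonicity of $(\log R)'$ on $[a,b]$ replaces the mean-value-theorem bookkeeping --- at the modest cost of verifying the quadratic identity for $R''R-(R')^2$, which you do correctly.
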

\begin{pf}
Note that the last equality is because of $\phi(d_1) = \phi(d_2)\, e^\kk$.
We prove the monotonicity in $\sigma$ by showing that the derivative of $R(-d_2)/R(-d_1)$ with respect to (w.r.t.) $\sigma$ is negative. The derivative is
\begin{align*}
	\frac{\partial}{\partial \sigma} \frac{R(-d_2)}{R(-d_1)} &=
\frac{d_1 R'(-d_2) R(-d_1) - d_2 R'(-d_1) R(-d_2)}{\sigma\, R(-d_1)^2} = - \frac{d_1 R(-d_1) - d_2 R(-d_2)}{\sigma\, R(-d_1)^2}
\end{align*}
where we used $d_1'(\sigma) = -d_2/\sigma$, $d_2'(\sigma) = -d_1/\sigma$, and $R'(x) = x R(x) - 1$. Because $R(-x)$ monotonically increases on $x$, so does $x R(-x)$. Moreover, $d_1(\sigma) > d_2(\sigma)$; thus, the derivative w.r.t. $\sigma$ is negative.

We prove the monotonicity in $\kk$ by showing that the derivative of $R(-d_2)/R(-d_1)$ w.r.t. $\kk$ is positive:
\begin{align*}
	\frac{\partial}{\partial \kk} \frac{R(-d_2)}{R(-d_1)} &=
	\frac{R'(-d_2) R(-d_1) - R'(-d_1) R(-d_2)}{\sigma\, R(-d_1)^2} \\
	&= \frac{R(-d_2)}{R(-d_1)} \left(1 + \frac{1}{\sigma R(-d_1)} - \frac{1}{\sigma R(-d_2)}\right) \\
	&= \frac{R(-d_2)}{R(-d_1)} \left(1 + \frac{d_2-d_1}{\sigma} m'(x) \right)
	= \frac{R(-d_2)}{R(-d_1)} \left(1 - m'(x) \right),
\end{align*}
where we apply the mean value theorem to $m(x) = 1/R(x)$ and $-d_1\le x\le -d_2$.
\citet[Eq.~(3)]{sampford1953inequalities} proved that $0 < m'(x) < 1$. Therefore, the derivative w.r.t. $\kk$ is positive. $\square$
\end{pf}
Figure~\ref{fig:price} (upper right) displays the shape of $\pdelta(\sigma; k)$. 
\begin{prop} \label{p:up2lo} Let
	\begin{equation*}
		J(y) = d_1^{-1}\circ \Phi^{-1}\left(\frac{c}{\pdelta(y)}\right).
	\end{equation*}
	When $U(c)$ is an upper bound of $\sigma$, $J\circ U(c)$ is a lower bound, and is always tighter than $L_2(c)$ in Proposition~\ref{p:l2}.
\end{prop}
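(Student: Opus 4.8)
The plan is to recognize $G$ as a strictly order-reversing map whose unique fixed point is the true IV $\sigma$; both claims then follow from monotonicity alone, with no estimates required. First I would establish the fixed-point identity $G(\sigma)=\sigma$. Since $\price(\sigma)=c$ by the definition of $\sigma$, the price-to-delta ratio at $\sigma$ satisfies $\pdelta(\sigma)=c/\bsdelta(\sigma)$, so that $c/\pdelta(\sigma)=\bsdelta(\sigma)=\Phi\circ d_1(\sigma)$. Substituting this into $G$ and using $d_1^{-1}\circ d_1=\mathrm{id}$ gives $G(\sigma)=d_1^{-1}\circ\Phi^{-1}(\Phi\circ d_1(\sigma))=\sigma$.

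Next I would show that $G$ is strictly decreasing in its argument $y$. By Lemma~\ref{lem:1}, $\pdelta(y)$ is increasing in $y$, hence $c/\pdelta(y)$ is decreasing; composing with the increasing maps $\Phi^{-1}$ and $d_1^{-1}$ (the latter from Eq.~\eqref{eq:d1inv}) shows that $G$ is decreasing. Because $U(c)\ge\sigma$, the decreasing property together with the fixed point yields $G\circ U(c)\le G(\sigma)=\sigma$, which is exactly the lower-bound claim. A preliminary point is that $c/\pdelta(U(c))$ must lie in $(0,1)$ for $G\circ U(c)$ to be defined: positivity is immediate, while monotonicity of $\pdelta$ gives $\pdelta(U(c))\ge\pdelta(\sigma)=c/\bsdelta(\sigma)>c$ since $\bsdelta(\sigma)=\Phi\circ d_1(\sigma)<1$, so the argument stays below $1$.

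For the comparison with $L_2(c)=d_1^{-1}\circ\Phi^{-1}(c)$ in Proposition~\ref{p:l2}, the key observation is that $\pdelta(y)<1$ for every $y>0$. This is read off directly from the definition $\pdelta(y)=1-e^\kk\,\Phi\circ d_2(y)/\Phi\circ d_1(y)$, since $e^\kk\ge1$ and the ratio $\Phi\circ d_2(y)/\Phi\circ d_1(y)$ is strictly positive. Consequently $c/\pdelta(U(c))>c$, and applying the increasing maps $\Phi^{-1}$ and $d_1^{-1}$ gives $G\circ U(c)>L_2(c)$; combined with the previous step this establishes $L_2(c)\le G\circ U(c)\le\sigma$.

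The only genuinely conceptual step is the fixed-point identity $G(\sigma)=\sigma$; once it is in place, the monotonicity of $\pdelta$ from Lemma~\ref{lem:1} does all the work. I do not expect a real obstacle, but the point deserving a moment of care is confirming that $c/\pdelta(U(c))$ remains inside $(0,1)$ so that $G\circ U(c)$ is well defined, which as noted is immediate from the monotonicity of $\pdelta$ and the bound $\bsdelta(\sigma)<1$.
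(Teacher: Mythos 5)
Your proof is correct and follows essentially the same route as the paper: both rest on the monotonicity of $\pdelta$ from Lemma~\ref{lem:1}, the identity $\pdelta(\sigma)=c/\bsdelta(\sigma)$, and the monotonicity of $d_1^{-1}\circ\Phi^{-1}$, with your fixed-point phrasing $G(\sigma)=\sigma$ being just a repackaging of the paper's inequality chain $c/\pdelta(U(c))\le\bsdelta(\sigma)$. Your added check that $c/\pdelta(U(c))$ stays in $(0,1)$ is a small point of care the paper leaves implicit.
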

\begin{pf}
	That $J\circ U(c)$ is a lower bound is an immediate result from Lemma~\ref{lem:1}:
	$$ \frac{c}{\bsdelta(\sigma)} = \pdelta(\sigma) \le \pdelta(U(c)) \qtext{or}
	\frac{c}{\pdelta(U(c))} \le \bsdelta(\sigma). $$
	The bound $J\circ U(c)$ is tighter than $L_2(c)$ because $\pdelta(\cdot) \le 1$ and $d_1^{-1}\circ \Phi^{-1}(\cdot)$ is monotonically increasing:
	$$ L_2(c) = d_1^{-1}\circ \Phi^{-1}(c) \le d_1^{-1}\circ \Phi^{-1}\left(\frac{c}{\pdelta(U(c))}\right) = J\circ U(y) \le \sigma. \quad \square$$
\end{pf}
\begin{rem}
	Conversely, when $L(c)$ is a lower bound, $J\circ L(c)$ is an upper bound. However, we did not explore this direction because we focused on determining tighter lower bounds for root-solving in Section~\ref{s:root}. It would be interesting to observe how the two upper bounds, $J\circ L(c)$ and $H\circ \bsdelta(L(c))$, obtained from a lower bound $L(c)$, compete.
\end{rem}
Based on Proposition~\ref{p:up2lo}, we provide two new lower bounds.
\begin{cor}[First tighter lower bound] \label{c:l23} From the upper bound $U_{23}(c)$ in Corollary~\ref{c:u23}, we obtain a new lower bound:
\begin{equation*}
L_\mathrm{U23}(c) = J\circ U_{23}(c).
\end{equation*}
\end{cor}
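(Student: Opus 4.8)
The plan is to recognize this as an immediate specialization of the general upper-to-lower transform established in Proposition~\ref{p:up2lo}. That proposition asserts that for \emph{any} upper bound $U(c)$ of $\sigma$, the composition $G\circ U(c)$ is a lower bound of $\sigma$ that moreover dominates $L_2(c)$. Since Corollary~\ref{c:u23} has already established that $U_{23}(c)$ is a valid upper bound of $\sigma$, I would simply substitute $U=U_{23}$ into Proposition~\ref{p:up2lo} to conclude that $L_\mathrm{U23}(c)=G\circ U_{23}(c)$ is a lower bound of $\sigma$. There is no new machinery to build: the result is a corollary precisely because it reuses the transform verbatim.

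Concretely, the single step is to trace the chain of inequalities from Proposition~\ref{p:up2lo} with $U=U_{23}$. Because $\pdelta(\cdot)$ is monotonically increasing in $\sigma$ by Lemma~\ref{lem:1} and $\sigma\le U_{23}(c)$, we have $\pdelta(\sigma)\le \pdelta(U_{23}(c))$; rearranging the identity $\pdelta(\sigma)=c/\bsdelta(\sigma)$ then gives $c/\pdelta(U_{23}(c))\le \bsdelta(\sigma)=\Phi\circ d_1(\sigma)$, and applying the increasing map $d_1^{-1}\circ\Phi^{-1}$ to both sides yields $G\circ U_{23}(c)\le\sigma$, which is the claimed lower bound. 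The dominance over $L_2(c)$ is inherited directly from Proposition~\ref{p:up2lo} via $\pdelta(\cdot)\le 1$.

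The only point requiring a moment's care --- and it is not really an obstacle --- is to confirm that the composition is well defined, i.e.\ that the argument $c/\pdelta(U_{23}(c))$ of $\Phi^{-1}$ lies in $(0,1)$. This holds because $\bsdelta(\sigma)=\Phi\circ d_1(\sigma)<1$ forces $\pdelta(\sigma)=c/\bsdelta(\sigma)>c$, so monotonicity gives $\pdelta(U_{23}(c))\ge\pdelta(\sigma)>c$ and hence $0<c/\pdelta(U_{23}(c))<1$. Since all the substantive work --- the monotonicity of $\pdelta$, the tightness relative to $L_2$, and the upper-bound property of $U_{23}$ --- was already carried out in Lemma~\ref{lem:1}, Proposition~\ref{p:up2lo}, and Corollary~\ref{c:u23}, the corollary follows with no further computation.
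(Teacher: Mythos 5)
Your proposal is correct and matches the paper exactly: the corollary is stated without a separate proof precisely because it is the direct specialization of Proposition~\ref{p:up2lo} to the upper bound $U_{23}(c)$ from Corollary~\ref{c:u23}, which is what you do. Your additional check that $c/\pdelta(U_{23}(c))$ lies in $(0,1)$ is a harmless (and mildly useful) extra detail the paper leaves implicit.
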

The next lower bound uses a similar idea, although it is not exactly in the form of Proposition~\ref{p:up2lo}.
\begin{cor}[Second tighter lower bound] \label{c:l3}
	Below, $L_3(c)$ is also a lower bound of $\sigma$ and is tighter than $L_2(c)$:
	\begin{equation*}
		L_3(c) = d_1^{-1} \circ \Phi^{-1}\left(\frac{c\,(c+e^\kk)}{2c + e^\kk - 1} \right).
	\end{equation*}
\end{cor}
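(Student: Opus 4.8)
The plan is to turn the claim into a single inequality for the price-to-delta ratio $\pdelta$ and then to discharge it using \emph{both} monotonicities in Lemma~\ref{lem:1}. Write the bracketed factor as $B(c) = \tfrac12 + e^\kk\big/\!\left[c(e^\kk+1)+(e^\kk-1)\right]$, so that $L_3(c) = d_1^{-1}\circ\Phi^{-1}\!\left(c\,B(c)\right)$. Since $d_1^{-1}\circ\Phi^{-1}$ is increasing, the two assertions ``$L_3$ is a lower bound'' and ``$L_3$ is tighter than $L_2$'' are together equivalent to the two-sided estimate $c \le c\,B(c) \le \bsdelta(\sigma)$, the left half giving $L_2(c)\le L_3(c)$ and the right half giving $L_3(c)\le\sigma$.

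The left half is immediate: $B(c)\ge 1$ reduces to $c\le 1$, which always holds, so $c\,B(c)\ge c$ and hence $L_3(c)\ge L_2(c)$. For the right half I would first recast it through $\pdelta$: since $\bsdelta(\sigma)=c/\pdelta(\sigma)$, the bound $c\,B(c)\le\bsdelta(\sigma)$ is equivalent to $\pdelta(\sigma)\le 1/B(c)$.

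The crux is to recognize that $1/B(c)$ is itself an at-the-money price-to-delta ratio. Taking $y=U'_3(c)$, the looser upper bound introduced in the proof of Proposition~\ref{p:up_new}, its defining relation gives $s:=\Phi(-y/2)=\tfrac{(1-c)(e^\kk+1)}{4e^\kk}$. A short computation then yields $1-2s=[c(e^\kk+1)+(e^\kk-1)]/(2e^\kk)$ and $1-s=[c(e^\kk+1)+3e^\kk-1]/(4e^\kk)$, so that $1/B(c)=(1-2s)/(1-s)$. Because $\kk=0$ forces $d_{1,2}=\pm y/2$, this is exactly $\pdelta(y;\,\kk=0)=\left(1-2\Phi(-y/2)\right)/\left(1-\Phi(-y/2)\right)$.

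With this identity in hand, the right half follows from the double monotonicity of Lemma~\ref{lem:1} ($\pdelta$ increasing in $\sigma$, decreasing in $\kk$): using that $U'_3(c)$ is an upper bound for $\sigma$ (Proposition~\ref{p:up_new}) and then that $\kk\ge 0$,
\[
\pdelta(\sigma;\kk)\ \le\ \pdelta\!\left(U'_3(c);\kk\right)\ \le\ \pdelta\!\left(U'_3(c);0\right)\ =\ \frac{1}{B(c)}.
\]
This gives $c\,B(c)\le\bsdelta(\sigma)=\Phi\circ d_1(\sigma)$, and applying the increasing map $d_1^{-1}\circ\Phi^{-1}$ completes $L_3(c)\le\sigma$. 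I expect the only real obstacle to be spotting the identity $1/B(c)=\pdelta(U'_3(c);0)$; everything else is the trivial $c\le1$ step or a direct appeal to Lemma~\ref{lem:1}. This also explains the remark that the bound is ``not directly in the form of Proposition~\ref{p:up2lo}'': that proposition exploits only monotonicity in $\sigma$, whereas here the monotonicity in $\kk$ is exactly what transports the at-the-money ratio to general $\kk$.
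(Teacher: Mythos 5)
Your proof is correct and follows essentially the same route as the paper: both bound $\pdelta(\sigma)\le\pdelta(U'_3(c);\kk)\le\pdelta(U'_3(c);0)$ using the upper bound $U'_3(c)$ and the two monotonicities of Lemma~\ref{lem:1}, and both identify the resulting at-the-money ratio with the reciprocal of the bracketed factor (the paper computes $\pdelta(U'_3(c);0)$ directly, whereas you verify the identity $1/B(c)=\pdelta(U'_3(c);0)$ after the fact — a cosmetic difference). The tightness over $L_2(c)$ via $B(c)\ge1$, i.e.\ $\pdelta(U'_3(c);0)\le1$, also matches the paper.
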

\begin{pf} Using $U_1(c)$ in Proposition~\ref{p:teh} and the monotonicity of $\pdelta$ in $k$ in Lemma~\ref{lem:1}, we can bound $\pdelta(\sigma)$:
$$ \pdelta(\sigma) \le \pdelta(U_1(c)) \le \pdelta(U_1(c); \kk=0) = \frac{2\,\Phi(U_1(c)/2)-1}{\Phi(U_1(c)/2)} = \frac{2c + e^\kk - 1}{c+e^\kk} \le 1.
$$
Therefore, we obtain the lower bound:
$$ L_3(c) = d_1^{-1} \circ \Phi^{-1}\left(\frac{c}{\pdelta(U_1(c); \kk=0)}\right) = d_1^{-1} \circ \Phi^{-1}\left(\frac{c\,(c+e^\kk)}{2c + e^\kk - 1}\right).
$$
Although not in the form of $J\circ U(c)$, $L_3(c)$ is also tighter than $L_2(c)$ because $\pdelta(U_1(c); \kk=0) \le 1$. $\square$
\end{pf}
Several remarks are in order.
\begin{rem}
	As $L_3(c)$ is tighter than $L_2(c)$, it provides better asymptotics than $L_2(c)$ as $c\downarrow 0$ and $c\uparrow 1$ (see Figure~\ref{fig:bound_p}). Moreover, unlike $L_2(c)$, $L_3(c)$ becomes exact when $k=0$ (ATM). Therefore, all bounds discussed so far, except $L_2(c)$ and $U_2(c)$, are exact if $k=0$:
	$$L_1(c) = L_3(c) = L_{U23}(c) = \sigma = U_{23}(c) = U_3(c) = U_1(c) \qtext{if} \kk=0.$$
\end{rem}
\begin{rem}
The bound $L_\mathrm{U23}(c)$ is tighter than $L_3(c)$ because
$\pdelta(U_{23}(c)) \le \pdelta(U_3(c)) \le \pdelta(U_1(c)) \le \pdelta(U_1(c);\kk=0).$
Therefore, the order of the lower bounds is $L_2(c) \le L_3(c) \le L_\mathrm{U23}(c) \le \sigma$.
\end{rem}
\begin{rem}
	From the proofs of the Cors. \ref{c:u23} and \ref{c:l3}, we obtain the bounds of the delta as functions of the option price $c$:
	$$ c \le \frac{c\,(c+e^\kk)}{2c + e^\kk - 1} \le \bsdelta(\sigma) \le \min\left(\frac{1 + c}{2},\, c + e^\kk \Phi(-\sqrt{2\kk})\right).
	$$
\end{rem}

\begin{figure}[ht!]
	\caption{\label{fig:bound_p} Upper and lower bounds and exact IV ($\sigma$) as functions of price $c$ for $e^\kk= 2$ (left) and $e^\kk= 8$ (right). The blue circle marks the inflection point $(\price(\sqrt{2k}),\,\sqrt{2k})$. While the upper plots cover the whole range of $c$ values, the lower plots focus on the $c\uparrow 1$ (high $\sigma$) limit. In the lower plots, the price is transformed by $L_1(c)$ to present the asymptotics better.}
	\begin{center}
		\includegraphics[width=0.48\linewidth]{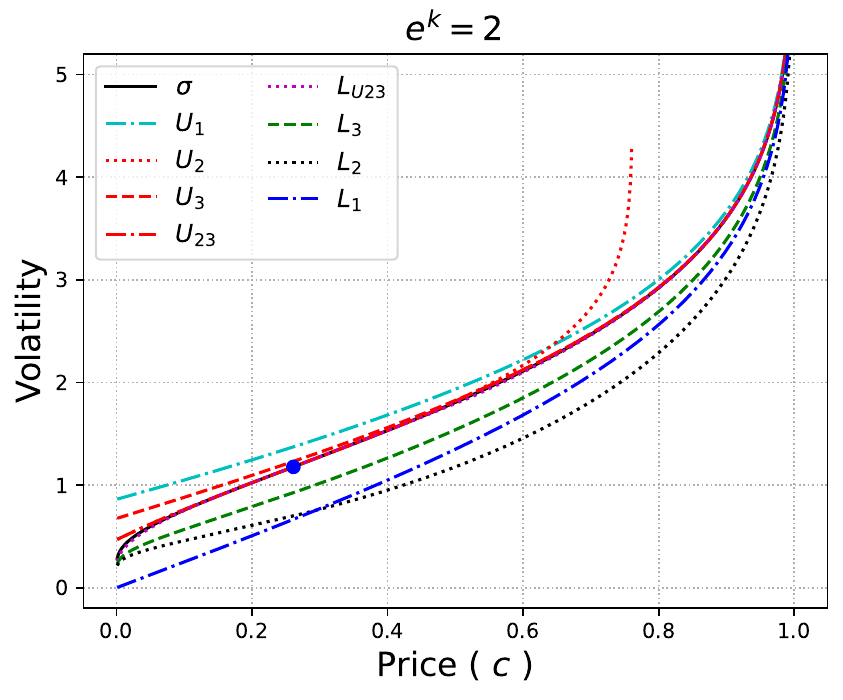}
		\includegraphics[width=0.48\linewidth]{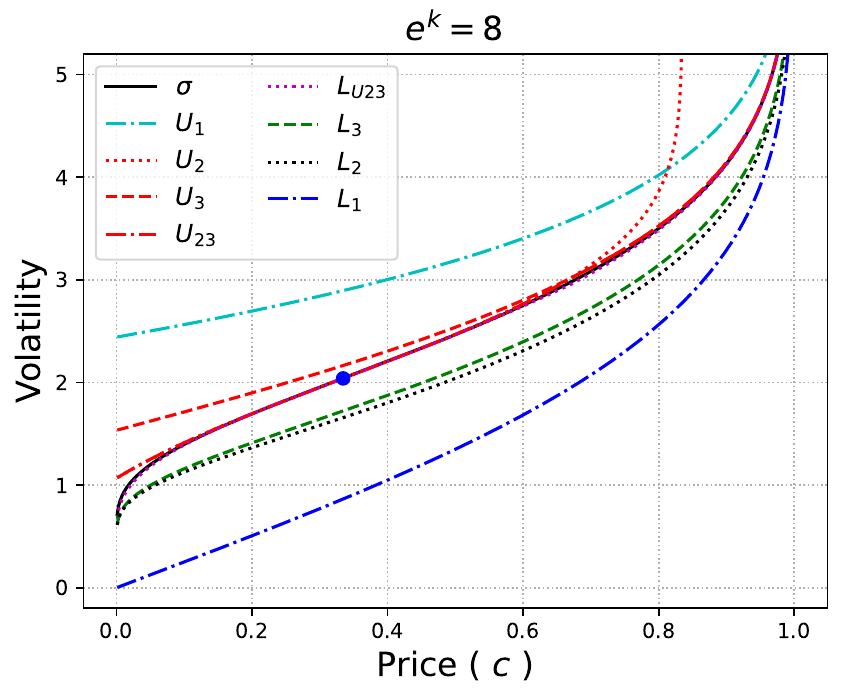}\\
		\includegraphics[width=0.48\linewidth]{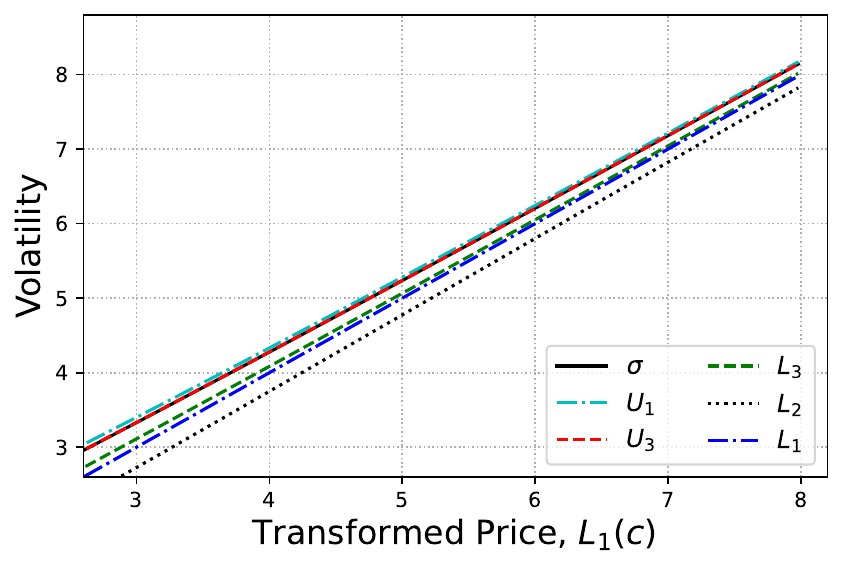}
		\includegraphics[width=0.48\linewidth]{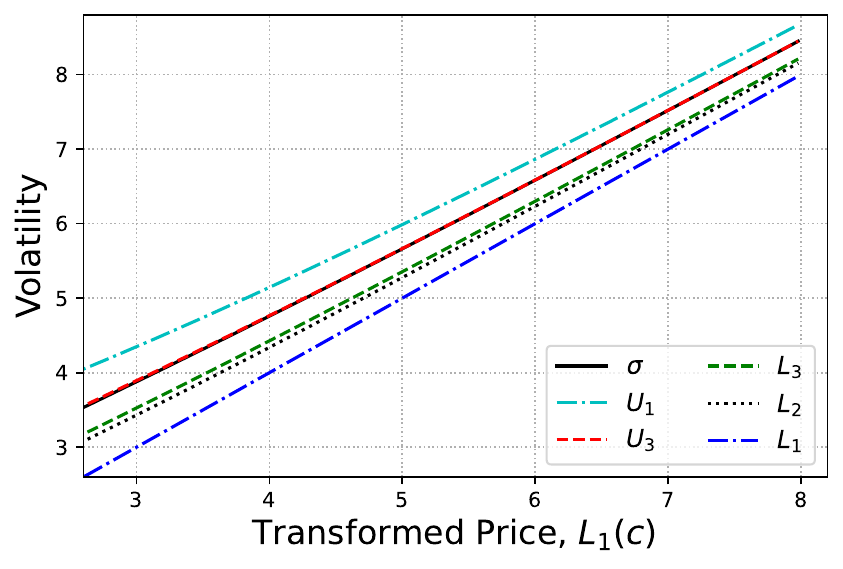}
	\end{center}
\end{figure}
\begin{figure}[ht!]
	\caption{\label{fig:bound_sig} Upper and lower bounds implied from $c=\price(\sigma; k)$ as functions of $k$ with fixed $\sigma=0.2$ (left) and $\sigma=1.5$ (right). The blue circle marks the inflection point, $k=\sigma^2/2$.}
	\begin{center}
		\includegraphics[width=0.48\linewidth]{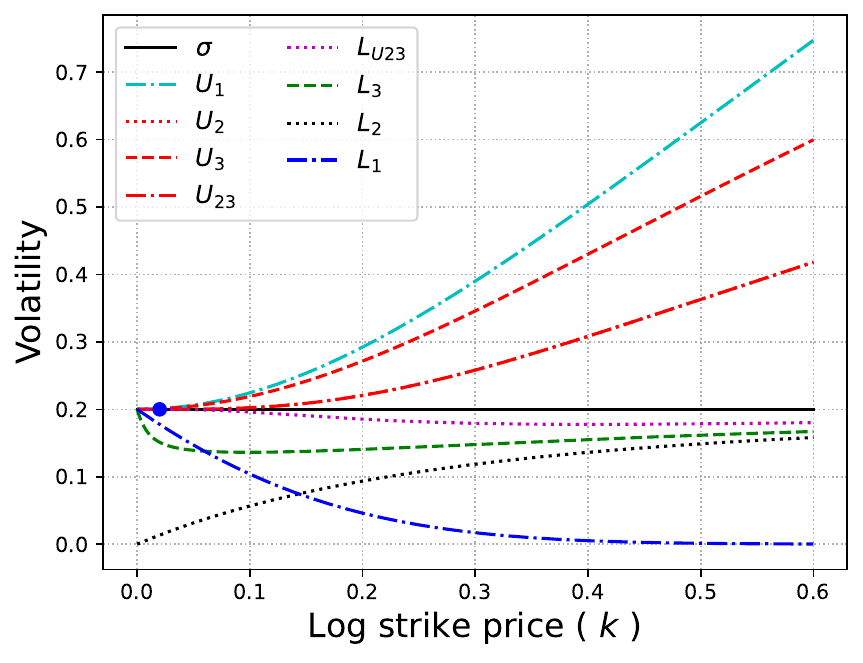}
		\includegraphics[width=0.48\linewidth]{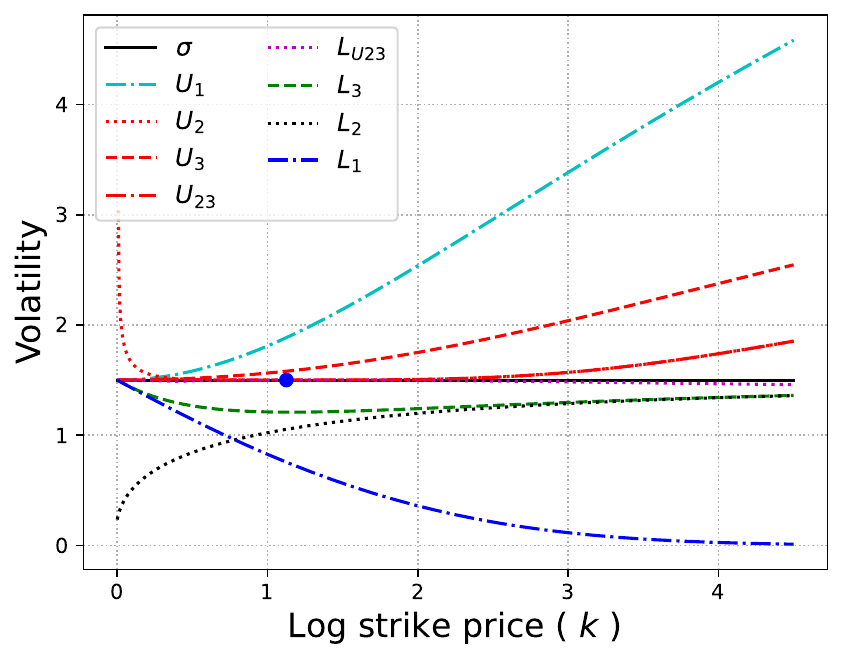}
	\end{center}
\end{figure}

Figures~\ref{fig:bound_p} and \ref{fig:bound_sig} depict the lower and upper bounds discussed in this section and confirm the tightness of the new bounds. The new upper bound $U_3(c)$ is a very accurate approximation of $\sigma$ as $c\uparrow 1$. The bottom panel of Figure~\ref{fig:bound_p} uses $L_1(c)$, instead of $c$, in the $x$-axis so that the asymptotic behavior is roughly linear. The bound $U_{23}(c)$ further extends the validity region below the inflection point. The lower bound $L_{U23}(c)$ transformed from $U_{23}(c)$ is very close to $\sigma$ across all regions of $0<c<1$. The new lower bound $L_3(c)$ is tighter than $L_2(c)$. Although $L_3(c)$ is a better approximation than $L_1(c)$ in general, $L_3(c)$ is not always tighter than $L_1(c)$ because exceptions exist near $k\approx 0$ as illustrated in Figure~\ref{fig:bound_sig}.

\section{Applications to numerical root-finding of implied volatility} \label{s:root} \noindent
In this section, we propose a new NR iteration on the log price that uses one of the new lower bounds as an initial guess. Inverting the option price to IV is arguably one of the most heavily used computations in finance. Given the absence of an analytical inversion formula, there has been extensive research on this topic. See \citep{orlando2017review} for a review. Among them, \citep{jackel2015let} and \citep{glau2019chebyshev} are known as the most efficient algorithms. 
Although not the most efficient, the naive NR method~\citep{manaster1982iv} (see Section~\ref{ss:nr_naive}) has been widely adopted because of simple implementations and guaranteed convergence. 
We present our new algorithm in Section~\ref{ss:nr_log} as a better alternative to the naive NR method. While the iteration formula (Proposition~\ref{p:nr_log}) is simple, the convergence is fast and uniform over all price ranges, as we demonstrate. 

\subsection{Naive Newton--Raphson method} \label{ss:nr_naive} \noindent
The IV, $\sigma$, is the root of
$$ f(y) = \price(y) - c = 0. $$
The NR iteration applied to the equation is given by 
$$ \sigma_{n+1} = \sigma_n - \frac{f(\sigma_n)}{f'(\sigma_n)} = \sigma_n - \frac{\price(\sigma_n) - c}{\bsvega(\sigma_n)}.$$
As $\price(y)$ is convex (concave) below (above) the inflection point, the NR iteration always converges to the root $\sigma$ if the iterations begin from the inflection point $\sigma_0 = \sqrt{2\kk}$~\citep{manaster1982iv}. We call this approach the \textit{naive} NR method.

The naive NR method usually works well, but the convergence slows significantly for the far out-of-the-money options (i.e., $c\downarrow 0$ with $k>0$) because $\price(y)$ flattens to zero below some level of $y$. Because $e^{-k^2/2\sigma^2}$ has an essential singularity at $\sigma=0$, the derivatives of $\price(\sigma)$ in any order are zero. For example, we consider an option with $e^\kk = 1.5$ and $\sigma=0.04$. This example is borrowed from the Wilmott forum thread titled ``\href{https://forum.wilmott.com/viewtopic.php?f=34&t=97812&start=37}{iv for all and all for iv}''.
If we invert the corresponding option price $c\approx 9.01\times 10^{-27}$ with the naive NR method, 50 iterations barely reach $\sigma_{50} = 0.04173$, which is still far from $\sigma=0.04$. Although it is difficult to encounter such a low option price in the option markets, accurate IV in extreme cases is required for IV asymptotics and calibration of advanced option pricing models.

\subsection{New Newton--Raphson methods on the log price} \label{ss:nr_log} \noindent
A natural solution to the problem caused by a low price is considering the log objective function:
\begin{equation} \label{eq:root_log}
	g(y) = \log \price(y) - \log c = 0.
\end{equation}
The objective function using log is not completely new. \citet{jackel2015let} used the reciprocal log price,  $g(y) = 1/\log\price(y) - 1/\log c$, and the log complimentary price, $g(y) = \log(1-\price(y)) - \log (1-c)$ for very low and very high price regions, respectively. To the best of our knowledge, the objective function in Eq.~\eqref{eq:root_log} with simple log price is new in the literature, and we use it for all price ranges uniformly.

As depicted in Figure~\ref{fig:price} (lower left), the log price changes rapidly as $\sigma\downarrow 0$; hence, it is suitable for the NR method. The question is whether the log NR method also guarantees convergence, as in the naive NR method. The following lemma, which parallels Lemma~\ref{lem:1}, provides an important clue.
\begin{lemma} \label{lem:2} The price-to-vega ratio $\pvega(\sigma)$ defined by
$$ \pvega(\sigma) = \frac{\price(\sigma)}{\bsvega(\sigma)} = \frac{\Phi\circ d_1(\sigma) - e^\kk \Phi\circ d_2(\sigma)}{\phi\circ d_1(\sigma)} = R(-d_1(\sigma)) - R(-d_2(\sigma)),
$$
is a monotonically increasing function of $\sigma > 0$.
\end{lemma}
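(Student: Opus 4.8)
The plan is to mimic the derivative computation of Lemma~\ref{lem:1}. Writing $\pvega = R(-d_1) - R(-d_2)$ and differentiating with the same rules $d_1'(\sigma) = -d_2/\sigma$, $d_2'(\sigma) = -d_1/\sigma$, and $R'(x) = xR(x) - 1$, I expect the derivative to collapse into a clean identity. Substituting these gives
$$\frac{d}{d\sigma}\pvega(\sigma) = \frac{d_1 d_2}{\sigma}\big(R(-d_2) - R(-d_1)\big) + \frac{d_1 - d_2}{\sigma} = 1 - \frac{d_1 d_2}{\sigma}\,\pvega(\sigma),$$
using $R(-d_2) - R(-d_1) = -\pvega$ and $d_1 - d_2 = \sigma$. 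Hence the whole problem reduces to proving the single inequality $d_1 d_2\,\pvega(\sigma) < \sigma$.

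Next I would dispose of the easy regime by examining the sign of $d_1 d_2 = k^2/\sigma^2 - \sigma^2/4$. For $\sigma \ge \sqrt{2k}$ (at or above the inflection point, and in particular whenever $k=0$) we have $d_1 d_2 \le 0$, so $d_1 d_2\,\pvega(\sigma) \le 0 < \sigma$ trivially because $\pvega > 0$. The only nontrivial regime is $0 < \sigma < \sqrt{2k}$, where $d_1 d_2 > 0$ and both $a := -d_1$ and $b := -d_2$ are strictly positive with $a < b$, $\,b - a = \sigma$, and $ab = d_1 d_2$.

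For this regime I plan to control $\pvega$ through a sharp Mills-ratio estimate. Using $R'(t) = tR(t) - 1$, I rewrite $\pvega(\sigma) = R(a) - R(b) = \int_a^b \big(1 - tR(t)\big)\,dt$. The classical lower Mills-ratio bound $R(t) > t/(1+t^2)$ for $t>0$ (see \cite{baricz2008mills}) then yields $1 - tR(t) < 1/(1+t^2)$, so that $\pvega(\sigma) < \arctan b - \arctan a$. Finally, invoking the addition formula $\arctan b - \arctan a = \arctan\!\big((b-a)/(1+ab)\big)$ (valid since $ab>0$) together with $\arctan x < x$, I obtain
$$d_1 d_2\,\pvega(\sigma) < ab\,\arctan\frac{b-a}{1+ab} < ab\cdot\frac{b-a}{1+ab} = \frac{ab}{1+ab}\,\sigma < \sigma,$$
which is exactly the required inequality.

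I expect the main obstacle to be the sub-inflection regime $\sigma<\sqrt{2k}$: there $d_1 d_2>0$, so a crude bound on $\pvega$ is not enough, and a naive mean-value estimate of $R(a)-R(b)$ or of $\arctan b - \arctan a$ fails because $a\sigma$ may be large. The crux is therefore to locate a bound on $\pvega$ tight enough to beat $\sigma/(d_1 d_2)$; the Mills-ratio lower bound $R(t)>t/(1+t^2)$, combined with the arctan addition formula, supplies precisely the needed gain through the factor $ab/(1+ab)<1$. I would finally check that every inequality used is strict on $(a,b)$, so that the derivative is strictly positive and strict monotonicity on $\sigma>0$ follows.
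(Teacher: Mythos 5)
Your proof is correct, but it takes a genuinely different route from the paper's. Both arguments begin from the same derivative $\pvega'(\sigma) = \tfrac{1}{\sigma}\left[R'(-d_1)\,d_2 - R'(-d_2)\,d_1\right]$, but they diverge immediately afterward. The paper simply rewrites this as $\tfrac{d_2}{\sigma}\left[R'(-d_1)-R'(-d_2)\right]-R'(-d_2)$ and reads off positivity from two cited facts about the Mills ratio: $R$ is decreasing (so $-R'(-d_2)>0$) and convex (so $R'(-d_1)<R'(-d_2)$, which combines with $d_2<0$ to make the first term positive). You instead substitute the identity $R'(x)=xR(x)-1$ to collapse the derivative to $1-\tfrac{d_1 d_2}{\sigma}\pvega(\sigma)$, reducing the lemma to the inequality $d_1 d_2\,\pvega(\sigma)<\sigma$; this is trivial at or above the inflection point where $d_1 d_2\le 0$, and below it you close the gap with Gordon's bound $R(t)>t/(1+t^2)$ plus the arctangent addition formula, obtaining the clean factor $ab/(1+ab)<1$. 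Your computations check out (in particular $\pvega=\int_a^b(1-tR(t))\,dt$ and the case split at $\sigma=\sqrt{2k}$ are handled correctly, with all inequalities strict). The trade-off: the paper's argument is a two-line deduction once convexity of $R$ is granted, whereas yours avoids convexity entirely at the cost of a sharper pointwise Mills-ratio estimate and a slightly longer chain; as a by-product your route yields the quantitative statement $\pvega'(\sigma)=1-\tfrac{d_1d_2}{\sigma}\pvega(\sigma)\ge \tfrac{1}{1+d_1d_2}>0$ in the sub-inflection regime, which is more information than bare monotonicity.
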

\begin{pf}
	The derivative w.r.t. $\sigma$ is
	\begin{align*}
		\pvega'(\sigma) &= \frac{1}{\sigma} [R'(-d_1) d_2 - R'(-d_2) d_1] = \frac{d_2}{\sigma} [R'(-d_1) - R'(-d_2)]
		- R'(-d_2).
	\end{align*}
	This result is positive because $R(x)$ is decreasing (i.e., $R'(-d_2)<0$) and convex $R'(-d_1) < R'(-d_2)$, and $d_2<0$. $\square$
\end{pf}
Figure~\ref{fig:price} (lower right) presents the shape of $\pvega(\sigma; k)$. 
\begin{prop} \label{p:nr_log}
The NR iteration for the log objective function, Eq.~\eqref{eq:root_log}, is expressed by
$$ \sigma_{n+1} = G(\sigma_n) \qtext{where}
G(y) = y + \left[\frac{d_1(y)^2}{2} - \log \pvega(y) + \log(c \sqrt{2\pi})\right] \pvega(y).
$$
The iteration converges to the root $\sigma$ if a lower bound is used as the initial guess, $\sigma_0 \le \sigma$.
\end{prop}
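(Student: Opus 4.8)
The plan is to recognize the iteration as Newton's method applied to the strictly increasing, concave function $g(y)=\log\price(y)-\log c$, and then invoke the classical one-sided (Newton--Fourier) monotone convergence. First I would confirm the stated form of $G$. Since $\price'=\bsvega$, we have $g'(y)=\price'(y)/\price(y)=\bsvega(y)/\price(y)=1/\pvega(y)$, so the Newton step reads
$$\sigma_{n+1}=\sigma_n-\frac{g(\sigma_n)}{g'(\sigma_n)}=\sigma_n+\big(\log c-\log\price(\sigma_n)\big)\,\pvega(\sigma_n).$$
Substituting $\log\price(y)=\log\phi(d_1(y))+\log\pvega(y)=-\frac12\log(2\pi)-\frac{d_1(y)^2}{2}+\log\pvega(y)$ turns the bracket into $\frac{d_1(y)^2}{2}-\log\pvega(y)+\log(c\sqrt{2\pi})$, which matches the claimed $G(y)$.

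The key step is concavity. Because $\price$ is strictly increasing, $g'=1/\pvega>0$, so $g$ is strictly increasing on $(0,\infty)$ with the unique root $\sigma$, and $g<0$ on $(0,\sigma)$ while $g>0$ on $(\sigma,\infty)$. Lemma~\ref{lem:2} states that $\pvega$ is strictly increasing on $\sigma>0$; hence $g'=1/\pvega$ is strictly decreasing, i.e., $g$ is concave on all of $(0,\infty)$. This is the single nontrivial ingredient, and it is handed to us directly by Lemma~\ref{lem:2}.

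With concavity established, convergence follows by induction. Suppose $0<\sigma_n\le\sigma$. Then $g(\sigma_n)\le0$ and $g'(\sigma_n)>0$ give $\sigma_{n+1}=\sigma_n-g(\sigma_n)/g'(\sigma_n)\ge\sigma_n$. For the upper bound, concavity means the graph lies below each tangent line; evaluating the tangent at $\sigma_n$ at the point $y=\sigma$ gives $0=g(\sigma)\le g(\sigma_n)+g'(\sigma_n)(\sigma-\sigma_n)$, and dividing by $g'(\sigma_n)>0$ rearranges to $\sigma_{n+1}\le\sigma$. Thus $\sigma_n\le\sigma_{n+1}\le\sigma$, so the sequence increases monotonically while staying bounded above by $\sigma$ and therefore converges to some $\sigma_\infty\in(\sigma_0,\sigma]\subset(0,\infty)$. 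Passing to the limit in $\sigma_{n+1}=G(\sigma_n)$, continuity of $G$ forces $g(\sigma_\infty)/g'(\sigma_\infty)=0$; since $g'(\sigma_\infty)$ is finite and positive, $g(\sigma_\infty)=0$, and uniqueness of the root yields $\sigma_\infty=\sigma$.

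I expect the only genuine subtlety — rather than a true obstacle — to be the bookkeeping that keeps every iterate inside the domain $(0,\infty)$, where $g$, $\pvega$, and the tangent-line inequality are all valid; this is guaranteed once the initial guess satisfies $0<\sigma_0\le\sigma$, which holds for the strictly positive lower bounds constructed in Section~\ref{s:bounds}. Everything else reduces to the textbook monotone Newton argument, so the whole proof hinges on reading off $g'=1/\pvega$ and then quoting the monotonicity of $\pvega$ from Lemma~\ref{lem:2} to obtain concavity.
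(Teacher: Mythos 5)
Your proposal is correct and follows essentially the same route as the paper: derive $g'=1/\pvega$, quote Lemma~\ref{lem:2} to get concavity of $g$, and conclude monotone convergence from below. The only difference is that you spell out the textbook Newton--Fourier induction ($\sigma_n\le\sigma_{n+1}\le\sigma$ and passage to the limit) that the paper leaves implicit in its final sentence.
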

\begin{pf}
The iteration step follows from the NR formula, $ \sigma_{n+1} = \sigma_n - g(\sigma_n)/g'(\sigma_n)$, where
\begin{equation*}
	g(y) = \log\phi(d_1(y)) + \log \pvega(y) - \log c
	\qtext{and}
	g'(y) = \frac{\bsvega(y)}{\price(y)} = \frac1{\pvega(y)}.
\end{equation*}
From Lemma~\ref{lem:2}, $g(y)$ is strictly increasing and concave (i.e., $\price(\sigma)$ is log-concave). Therefore, the NR iteration from a lower bound ($\sigma_0\le \sigma$) always converges to $\sigma$ from below. $\square$
\end{pf}
\begin{rem}
	The volatility transformation $G(y)$ makes a lower bound tighter.
	It is a counterpart of $H\circ \bsdelta(y)$ (see Remark~\ref{r:HD}), which makes an upper bound tighter.
\end{rem}

\subsection{Numerical Experiment} \noindent
We discuss the numerical implementation and results of the new root-finding method in Proposition~\ref{p:nr_log}. Among many lower bounds of $\sigma$ in Section~\ref{s:bounds}, we select $L_3(c)$ in Corollary~\ref{c:l3} for the initial guess $\sigma_0$. This choice is a compromise between tightness and computational cost. Although $L_\mathrm{U23}(c)$ is the tightest lower bound, the computation cost is high. Whereas $L_3(c)$ requires only one evaluation of $\Phi^{-1}(\cdot)$, $L_\mathrm{U23}(c)$ requires the evaluation of $\Phi^{-1}(\cdot)$ three times (in $U_{23}$ and $G$) and $\Phi(\cdot)$ two times (in $\pdelta$). The extra computational cost for $L_\mathrm{U23}(c)$ roughly corresponds to two more iterations, and $L_3(c)$ improved by two iterations (i.e., $\sigma_2 = G\circ G\circ L_3(c)$) is better than $L_{U23}(c)$ overall.

Evaluating $G(y)$ requires only $\pvega(y)$ and other elementary functions.
In the implementation, we evaluated $\pvega(\cdot)$ via the scaled complementary error function, $\erfcx{x}$,
$$ \erfcx{x} = \frac{2e^{x^2}}{\sqrt\pi}\int_x^\infty e^{-t^2}dt \qtext{and}
R(x) = \sqrt{\frac{\pi}2} \erfcx{\frac{x}{\sqrt2}},$$
because a fast evaluation algorithm~\citep{johnson2015faddeeva} is available in Python \texttt{scipy} package. Therefore, the computational cost of one iteration in the log NR method is similar to that in the naive NR method. 

\begin{figure}[ht!]
\caption{\label{fig:conv} Convergence of the naive NR (top) and log NR (bottom) methods for $e^k = 2$ (left) and $e^k = 8$ (right).}
\begin{center}
	\includegraphics[width=0.48\linewidth]{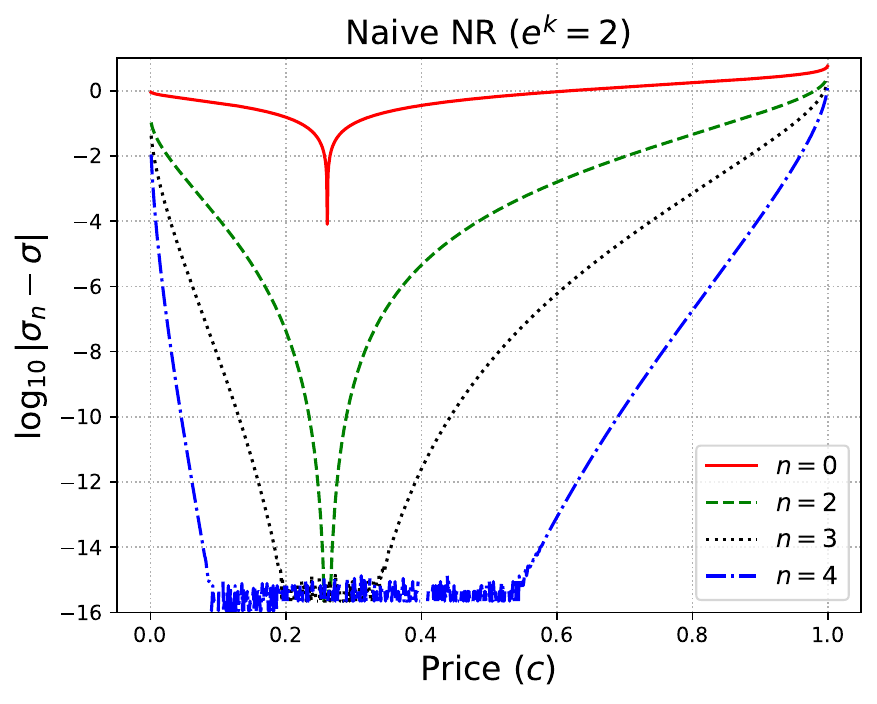}
	\includegraphics[width=0.48\linewidth]{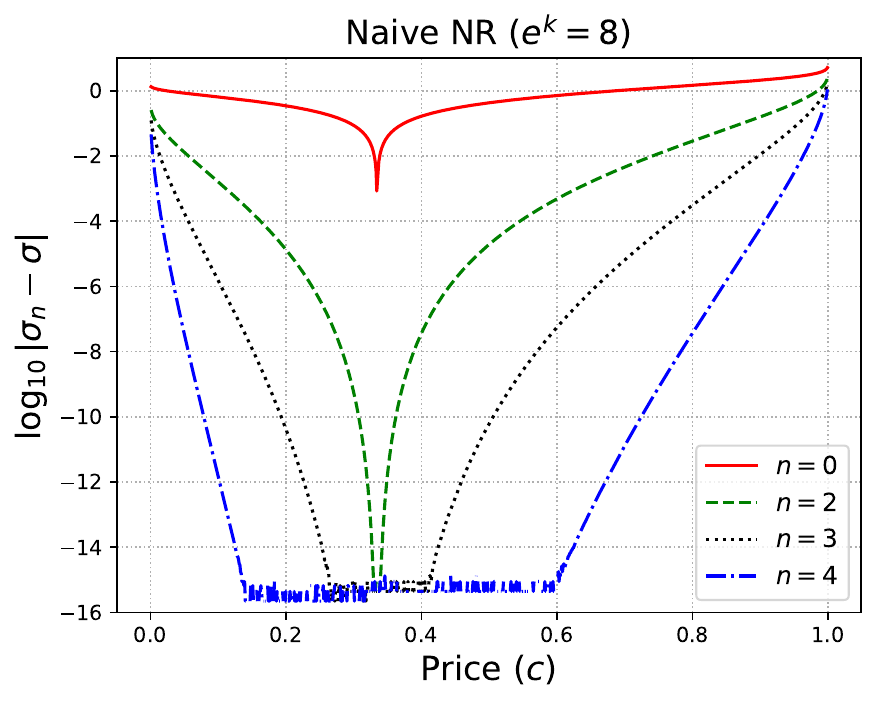}\\
	\includegraphics[width=0.48\linewidth]{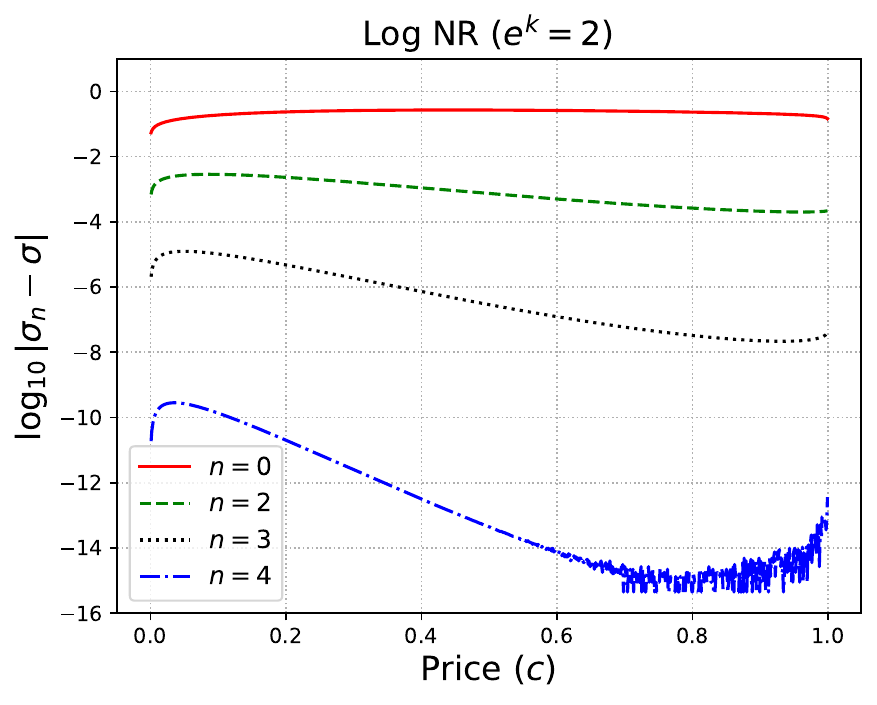}
	\includegraphics[width=0.48\linewidth]{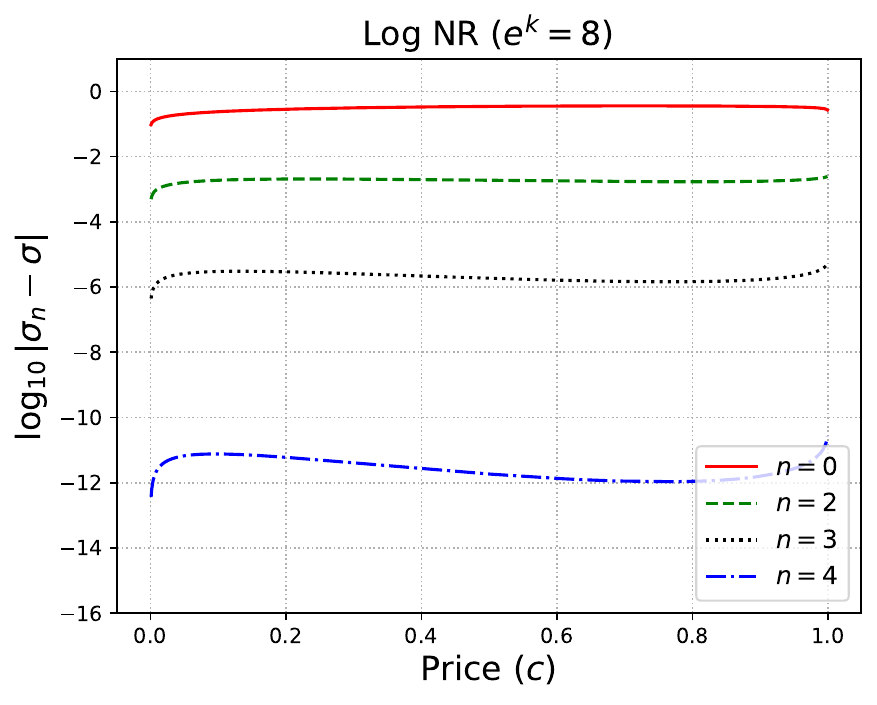}
\end{center}
\end{figure}

We test the accuracy of the new NR method for an extensive range of the grid of $(k, c)$:
\begin{align*}
	k &\in \{0, 10^{-10},\; 10^{-9}, \cdots, 0.01,\; 0.02,\; 0.03, \cdots, 10.00 \}, \\
	c &\in \{10^{-40},\; 10^{-39}, \cdots, 0.0001,\; 0.0002,\; 0.0003, \cdots, 0.9999 \}.
\end{align*}
After $n$ iterations, we measure the log error, $|g(\sigma_n)|$, which is much stricter than the absolute error, $|f(\sigma_n)|$:
$$ |g(\sigma_n)| = |\log \price(\sigma_n) - \log c\,| \approx \frac{|\price(\sigma_n) - c\,|}{c} \ge |f(\sigma_n)|.$$
The maximum log error of the new NR method over the $(k,c)$ grid is in the order of $10^{-8}$ after four iterations and in the order of $10^{-13}$ after five iterations. The absolute price error corresponds to the orders of $10^{-11}$ and $10^{-16}$, respectively. The maximum difference between $\sigma_4$ and $\sigma_5$ is $7\times 10^{-10}$. For the example discussed above ($e^\kk = 1.5$, $\sigma=0.04$, and $c\approx 9.01\times 10^{-27}$), only three iterations result in the volatility error, $\sigma_3 - \sigma = -2\times 10^{-11}$, and the log price error, $g(\sigma_3) = - 10^{-8}$.

Figure~\ref{fig:conv} compares the convergence speed of the naive and log NR methods for the whole price range. In the naive NR method, the convergence is slow when the price moves to both ends of the extreme. For a significant portion of low and high prices, $\sigma_n$ does not reach high accuracy after four iterations. In the log NR method, however, the convergence is uniformly fast. At each stage of iterations, the volatility error level is similar across the price range, and the error level falls below $10^{-9}$ after four iterations for all prices in the two examples. The fast and uniform convergence of the log NR method is thanks to the log objective function and tje tight lower bound, $L_3(c)$, used for an initial guess of the log NR method.

\newpage
\section*{Funding} \noindent
Jeonggyu Huh received financial support from the National Research Foundation of Korea (Grant No. NRF-2022R1F1A1063371).

\section*{Data Availability Statement} \noindent
The Python implementations that produce the findings of this study are openly available at \url{https://github.com/PyFE/PyfengForPapers}.

\begin{singlespace}
	\bibliography{BlackScholes-Abbr}
\end{singlespace}
\end{document}